








\documentclass[a4paper,UKenglish,cleveref, autoref]{article}
\usepackage{times}  
\usepackage{helvet} 
\usepackage{courier}  
\usepackage{url}  
\usepackage{graphicx}  
\usepackage{hyperref}
\usepackage{comment}
\pdfinfo{
	/Title (Inconsistent Planning: When in doubt, toss a coin!)
	/Author (Eduard Eiben, Fedor Fomin, Fahad Panolan, Kirill Simonov)
}

\usepackage[utf8]{inputenc}
\usepackage[english]{babel}
\usepackage{amsmath}
\usepackage{amssymb}
\usepackage{amsthm}
\usepackage{thm-restate}
\usepackage{vmargin}
\setmarginsrb{1in}{1in}{1in}{1in}{0pt}{0pt}{0pt}{6mm}

\usepackage{algorithm}
\usepackage{algorithmic}
\usepackage{xspace}
\usepackage{xargs}


\usepackage[switch]{lineno}

\usepackage{amssymb,amsmath,amsthm}
\usepackage{graphicx,color}
\usepackage{nicefrac}
\usepackage{todonotes}

\newtheorem{theorem}{Theorem}
\newtheorem{corollary}{Corollary}
\newtheorem{definition}{Definition}

\newtheorem{lemma}{Lemma}
\newtheorem{claim-n}{Claim}[section]
\newtheorem{remark}{Remark}

\DeclareMathOperator*{\argmin}{arg\,min}

\newcommand{\costirrationality}{X_\beta}

\newcommand{\Oh}{\mathcal{O}}

\newcommand{\Em}{\mathbf{E}}

\newcommand{\Var}{\mathbf{Var}}
\newcommand{\EstIrrCost}{\textsc{Estimating the Cost of Irrationality}\xspace}
\newcommand{\MinIrrCost}{\textsc{Minimum Cost of Irrationality}\xspace}
\newcommand{\MaxIrrCost}{\textsc{Maximum Cost of Irrationality}\xspace}
\newcommand{\CountPar}{\textsc{Counting Partitions}\xspace}
\newcommand{\EstIrrCostshort}{\textsc{ECI}\xspace}
\newcommand{\MCI}{\textsc{MCI}\xspace}

\newcommand{\tw}{\operatorname{tw}}
\newcommand{\fvs}{\operatorname{fvs}}
\newcommand{\vc}{\operatorname{vc}}
\newcommand{\fes}{\operatorname{fes}}

\DeclareMathOperator{\operatorClassNP}{{\sf NP}}
\newcommand{\classNP}{\ensuremath{\operatorClassNP}}

\DeclareMathOperator{\operatorClassFPT}{{\sf FPT}\xspace}
\newcommand{\classFPT}{\ensuremath{\operatorClassFPT}\xspace}
\DeclareMathOperator{\operatorClassW}{{\sf W}}
\newcommand{\classW}[1]{\ensuremath{\operatorClassW[#1]}}

\DeclareMathOperator{\operatorClassXP}{{\sf X}P\xspace}
\newcommand{\classXP}{\ensuremath{\operatorClassXP}\xspace}
\DeclareMathOperator{\operatorClassSharpP}{ {\#} {\sf P}\xspace}
\newcommand{\classSharpP}{\ensuremath{\operatorClassSharpP}\xspace}

\newcommandx{\defparproblem}[4][]{
    \vspace{3mm}
    \noindent\fbox{
        \begin{minipage}{0.8\textwidth}
            \begin{tabular*}{\textwidth}{@{\extracolsep{\fill}}lr} {#1} \\ 
            \end{tabular*}
            {\textbf{Input:}} {#2}  \\
            {\textbf{Parameter:}} {#3}  \\
            {\textbf{Task:}} {#4}
        \end{minipage}
    }
    \vspace{4mm}
}

\newcommandx{\defsimpleproblem}[3][]{
    \vspace{3mm}
    \noindent\fbox{
        \begin{minipage}{0.8\textwidth}
            \begin{tabular*}{\textwidth}{@{\extracolsep{\fill}}lr} {#1} \\ 
            \end{tabular*}
            {\textbf{Input:}} {#2}  \\
            {\textbf{Task:}} {#3}
        \end{minipage}
    }
    \vspace{4mm}
}


\title{Inconsistent Planning: When in doubt, toss a coin!}
\author{Yuriy Dementiev,\textsuperscript{\rm 1} Fedor V.~Fomin\footnote{The research leading to these results has received funding from the Research Council of Norway via the project  BWCA (grant no. 314528).},\textsuperscript{\rm 2} Artur Ignatiev\textsuperscript{\rm 1}\\
\textsuperscript{\rm 1} St.Petersburg State University, Russia\\
\textsuperscript{\rm 2} Department of Informatics, University of Bergen, Norway\\
yuru.dementiev@gmail.com, fedor.fomin@uib.no, artur.ignatev23924@gmail.com}
\date{}
\begin{document}

\maketitle
\begin{abstract}
    One of the most widespread human behavioral biases is the present bias---the tendency to overestimate current costs by a bias factor. Kleinberg and Oren (2014) introduced an elegant graph-theoretical model of inconsistent planning capturing the behavior of a present-biased agent accomplishing a set of actions. The essential measure of the system introduced by Kleinberg and Oren  is the \emph{cost of irrationality}---the ratio of the total cost of the actions performed by the present-biased agent to the optimal cost. This measure is vital for a task designer to estimate the aftermaths of human behavior related to time-inconsistent planning, including procrastination and abandonment.  

As we prove in this paper, the cost of irrationality is highly susceptible to the agent's choices when faced with a few possible actions of equal estimated costs. To address this issue, we propose a modification of Kleinberg-Oren's model of inconsistent planning.  In our model, when an agent selects from several options of minimum prescribed cost,  he uses a randomized procedure. We explore the algorithmic complexity of computing and estimating the cost of irrationality in the new model. 
\end{abstract}
\section{Introduction}

\emph{Time-inconsistent behavior} is the term in behavioral economics and psychology describing the behavior of an agent optimizing a course of future actions but changing his optimal plans in the short run without new circumstances \cite{thaler2015misbehaving}. For example, why do we buy a year swim membership and not go to the swimming pool after that? Why do we procrastinate when it comes to paying off credit card debt? Why do we want to eat healthier but have little incentive to do so?  As  Socrates in Plato's Protagoras asks, if one judges a certain behavior to be the best course of action, why would one do anything else?

A standard assumption in behavioral economics used to explain the gap between long-term intention and short-term decision-making is the notion of \emph{present bias}. According to  \cite{o1999doing}, when considering trade-offs between two future moments, present-biased preferences give stronger relative weight to the earlier moment as it gets closer.  

The mathematical idea of present bias goes back to 1937 when \cite{Samuelson1937} introduced the discounted-utility model. It has developed into the hyperbolic discounting model, one of the cornerstones of behavioral economics~\cite{Laibson1994,McClure2004}. 
A simple mathematical model of  
    present bias was suggested in~\cite{Akerlof91}.
    In Akerlof's model,  the  {\  salience factor} causes the agent to put more weight on immediate events than on the future. Thus 
the cost of an action that will be  perceived in the future is assumed to be $\beta$ times smaller than its actual cost,  for some present-bias parameter  $\beta <1$. 
   It appears that even a tiny salience factor could yield high extra costs for the agent.  
 
Kleinberg and Oren~\cite{KleinbergO14,KleinbergO18} introduced an elegant  graph-theoretic model  encapsulating the salience factor and scenarios of Akerlof.   The approach   is based on analyzing how an agent traverses from a source~$s$ to a target~$t$ in a directed  edge-weighted graph~$G$.   Before defining this model formally,  we provide an  illustrating  example. The example, up to small modifications, is borrowed from ~\cite{KleinbergO14} and originally due to Akerlof~\cite{Akerlof91}.

\medskip\noindent\textbf{Example.} One of the authors of this paper, we call him Bob, is planning to write reviews for AAAI.  He estimates the cost (say, estimated time) of this task as $c$.  While the definite deadline is on Friday, Bob's  initial plan is to write reviews on Monday. However, on Monday, Bob realizes that he also needs to check Google Scholar to find out who cited his paper. He estimates the cost of the other task as $x$.  Now Bob meets the dilemma. He can either (a) write reviews today or (b) check Google Scholar today and write reviews tomorrow. While estimating the costs, Bob uses the present-bias parameter $\beta$. Thus he estimates the cost of (b) as $x+\beta c$. It appears that  $x+\beta c<c$, therefore Bob decides to pursue (b). On Tuesday, the story repeats by procrastinating with Instagram, and on Wednesday with Facebook, see Fig.~\ref{fig:ackerlof}.  At the end, Bob sends the reviews on Friday, spending on this job totally $4x+c$ instead of $c$. 
\begin{figure}[ht]
\center{\includegraphics[scale=0.3]{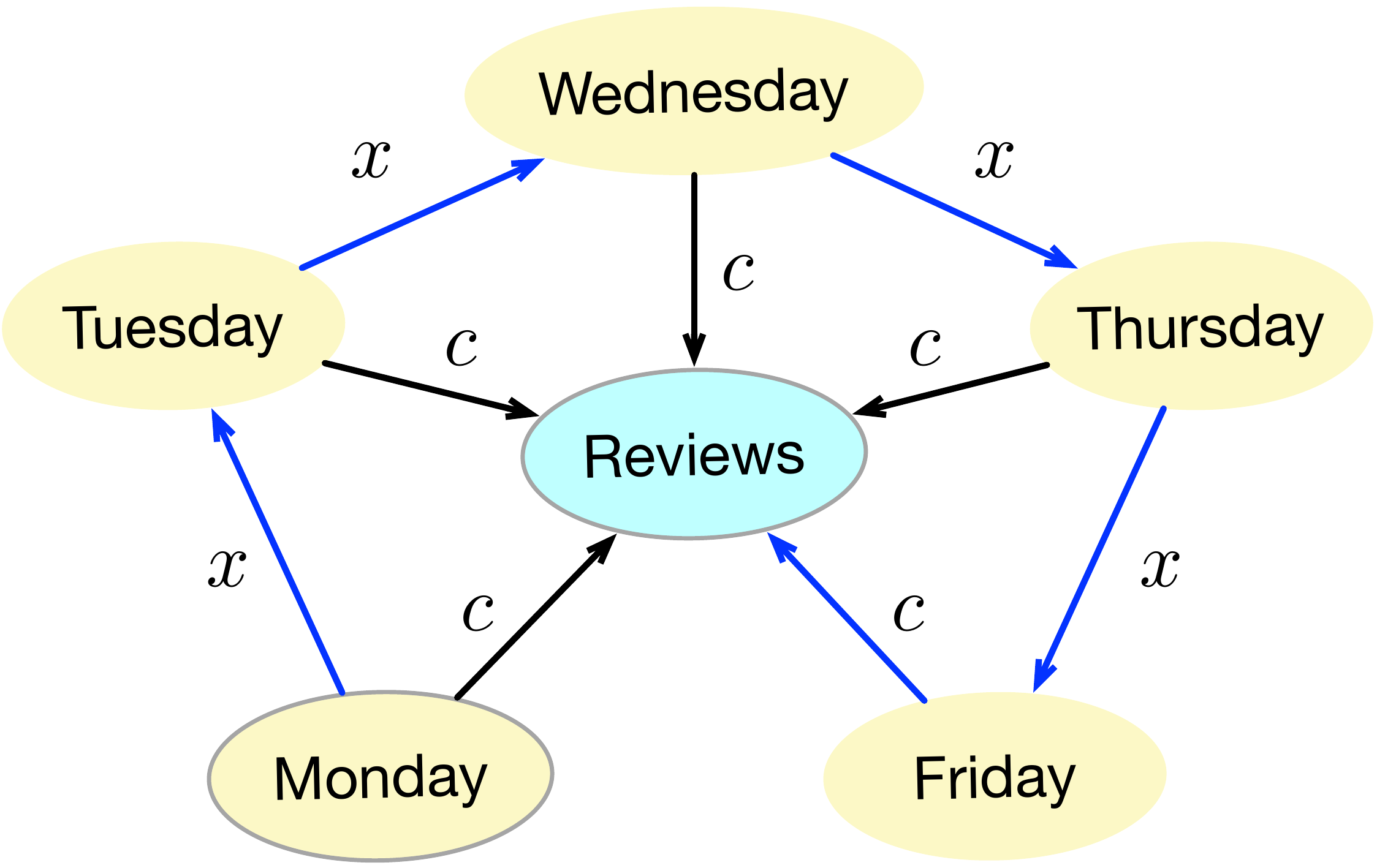}}
\caption{Task graph $G$ with the source point $s=\texttt{Monday}$ and the target point $t=\texttt{Reviews}$. Assuming $x+\beta c<c$, Bob will follow the blue path rather just writing the reviews on Monday. The ratio between the costs of the path followed by the agent and the optimal $s$-$t$ path can be made arbitrarily large by adding more days of procrastination.}\label{fig:ackerlof}
\end{figure}

\medskip\noindent\textbf{Kleinberg-Oren's Model.}
  An instance of the {\em time-inconsistent planning model} is a 5-tuple $M = (G, w, s, t,   \beta)$ where:
\begin{itemize}
    \item $G = (V(G), E(G))$ is a directed acyclic $n$-vertex graph called a {\em task graph}. $V(G)$ is a set of elements called {\em vertices}, and $E(G) \subseteq V(G)\times V(G)$ is a set of \emph{arcs} (directed {edges}). The graph is {\em acyclic}, which means that there exists an ordering of the vertices called a {\em topological order} such that, for each edge, its first endpoint comes strictly before its second endpoint in the ordering.  Informally speaking, vertices represent states of intermediate progress, whereas edges represent possible actions that transitions an agent between states.
    \item $w : E(G) \to \mathbb{N}$ is a function representing the costs of transitions between states. The transition of the agent  from state $u$ to state $v$ along arc $uv\in E(G)$ is of cost  $w(uv)$. 
    \item The agent starts from the start vertex $s \in V(G)$. 
    \item $t \in V(G)$ is the target vertex.
    \item $\beta  \leq 1$ is the agent's present-bias parameter. 
\end{itemize}

An agent is initially at vertex $s$ and can move in the graph along arcs in their designated direction. The agent's task is to reach the target $t$. The agent moves according to the following rule. When standing at a vertex $v$, the agent evaluates (with a present bias) all possible paths from $v$ to $t$.  In particular, a $v$-$t$ path $P \subseteq G$ with edges $e_1, e_2, \ldots,e_p$ is evaluated by the agent standing at $v$ to cost 
\[\zeta_M(P) = w(e_1) + \beta  \cdot  \sum^{p}_{i=2}w(e_i).\] We refer to this as the {\em perceived} cost of the path $P$. For a vertex $v$, its \emph{perceived cost to the target} is the minimum perceived cost of any path to $t$, \[\zeta_M(v) = \min \{\zeta_M(P) \mid P \text{ is a }v\text{-}t \text{ path}\}.\] We refer to an  $v$-$t$ path $P$ with perceived cost  $\zeta_M(v)$ as to a \emph{perceived path}. 
  Thus when in vertex $v$, the agent picks one of the perceived paths and traverses its first edge, say $vu$.
  After arriving to the new vertex  $u$, the agent  computes the  {perceived cost to the target} $\zeta_M(u)$, selects a perceived   $u$-$t$ path , and   traverse its first edge. This repeats until the agent   reaches  $t$.

  
  Let $P_\beta (s,t)$ be a  $s$-$t$ path followed by an agent with present-bias $\beta$ and let $c_\beta (s,t)$ be the cost of this path.  Let $d(s,t)$ be the distance, that is the cost of a shortest $s$-$t$ path. Then Kleinberg and Oren defined the  measure describing the ``\emph{price of irrationality}'' of the system.
  \begin{definition}[Cost of irrationality \cite{KleinbergO14}]  \label{def:cost1}The \emph{cost of the irrationality} of the  time-inconsistent planning model   $M = (G, w, s, t,   \beta)$ is 
  \[
  \frac{c_\beta (s,t)}{d(s,t)}
  .\]
  \end{definition}
   
Thus in our example in Fig.~\ref{fig:ackerlof}, the cost of irrationality is  $\frac{4x+c}{c}$.  
One omitted detail in the definition makes the meaning of the cost of irrationality ambiguous. 
It could be that several paths with minimum perceived cost $\zeta_M(v)$ lead from $v$. In this situation an agent in the state $v$ might be  indifferent between several arcs leaving  $v$---they both evaluate to equal perceived  costs. Hence 
 there could be several different \emph{feasible} paths $P_\beta (s,t)$ which the agent could follow.  
While for the agent standing in a vertex $v$ the perceived costs of all perceived paths are the same, the actual costs of feasible paths could be different.  


Two approaches to address this ambiguity could be found in the literature.  First, one can assume that an agent uses a consistent tie-breaking rule. For example, 
Kleinberg and Oren in~\cite{KleinbergO14} suggest selecting the node that is earlier in a fixed topological ordering of $G$. Kleinberg, Oren, and Raghavan~\cite{KleinbergOR16} consider the situation when the arcs are ordered, and an agent selects the largest available arc.  The disadvantage of this approach is that it is hard to imagine someone building their plans based on a topological ordering of tasks or arbitrarily labeled arcs in a real-life scenario.  
Another approach taken by Albers and Kraft~\cite{Albers2018}  and by  Fomin and Str{\o}mme~\cite{FominS20} is to break ties arbitrarily. As we will see, depending on how an agent breaks the ties, the cost of irrationality could change exponentially in the number of vertices.  Therefore, with the second approach, the value of the cost of irrationality is not well-defined.

Because of that, we revisit the model of Kleinberg and Oren in \cite{KleinbergO14} and redefine the cost of irrationality. Our approach is natural ---\emph{when in doubt, toss a coin!}  When several paths of minimum prescribed cost lead from $v$, the agent selects one of them with some probability and traverses the first arc of this path.  

More precisely, we view the graph as a Markov decision process. Thus the 
instance of the {\em time-inconsistent planning model} is a 6-tuple $M = (G, w, s, t, p,  \beta)$, where for  each edge $uv$ of the task graph, we assign the probability  $p(u,v)$ of transition $u\to v$. 
Here  for every $u\in V(G)$, 
$\sum_{uv\in E(G)}p(u,v)=1.$

Moreover, the probability can be positive only for edges that could serve for transitions of the agent. In other words, $p(u,v)>0$ only if there is a  $u$-$t$ path $P$ of perceived cost $\zeta_M(u)$ whose first edge is  $uv$. 
The selection of probability $p$ corresponds to some predictions or future preferences in breaking the ties. For example, 
when the agent at stage $u$ faces $\ell$ $u$-$t$ paths of minimum perceived cost and has no preferences over any of them,   it would be natural to assign each transition from $u$ the probability $1/\ell$. On the other hand, if the agent has preferences in selecting from paths of equal costs, this can be controlled by a different selection of   $p$. 
With these settings, we call an $s$-$t$ path $P $ \emph{feasible}, if with a non-zero probability the present-biased agent will follow $P$.

 Now we can define the cost of  the agent with present-bias $\beta$ as  discrete random variable $C_\beta$ with $\Pr (C_\beta =W)$ being the probability that the path traversed  by the agent is of cost $W$.   Then we can redefine the cost of irrationality as follows. 
  
    \begin{definition}[Revised cost of irrationality] \label{def:cost2}The \emph{cost of the irrationality} of the  time-inconsistent planning model   $M = (G, w, s, t, p,  \beta)$ is 
  \[
\costirrationality=  \frac{C_\beta}{d(s,t)}
  .\]
  \end{definition}
 Let us note that when no ties occur, our definition coincides with the definition of Kleinberg and Oren.  
Estimating the cost of irrationality $\costirrationality$ could help the task-designer to evaluate the chances of \emph{abandonment}, the situation when an agent realizes that accomplishing the task takes much more effort than he presumed initially, and thus ultimately gives up.
  
  \medskip\noindent\textbf{Example cont.}
  In the example in Fig.~\ref{fig:ackerlof}, let us put $c=6$, $x=3$, and $\beta = \frac{1}{2}$. We also assume that Bob does not have preferences between two actions of minimum perceived costs and thus pursue one of the actions with probability $p=1/2$. On Monday, Bob selects between two options of perceived costs $6$: either to write reviews that costs $c=6$, or to check Google Scholar and write reviews tomorrow, which costs $x+\beta c=6$. The probability that Bob will finish reviews on Monday, and thus will spend $c=6$ hours,  is $1/2$.   The optimal cost $d(s,t)=c=6$. 
  Hence   $\Pr (C_\beta \leq 6)=\frac{1}{2}$ and thus   $\Pr (\costirrationality \leq 1)=\frac{1}{2}$ . The probability that Bob finishes the job on Tuesday and thus will spend $x+c=9$, is $\big(\frac{1}{2}\big)^2$. Therefore,  $\Pr (\costirrationality  \leq 9/6=3/2)=\frac{1}{2}+\big(\frac{1}{2}\big)^2.$ The situation repeats up till Thursday, and we have that for $1\leq i\leq 4$, 
$\Pr (\costirrationality  \leq 1+ (i-1)/2)=\sum_{j=1}^i  \Big(\frac{1}{2}\Big)^j.$
Bob has to submit by  Friday, so    $\Pr (\costirrationality  \leq 3)=1$.

\medskip\noindent\textbf{Our contribution.}  We introduce the randomized version of the cost of irrationality  and initiate its study from the computational perspective. 
To support our point of view on the cost of irrationality, 
%
%
%
 we start from the combinatorial result (Theorem~\ref{thm:expnumpaths}), showing that there are   time-inconsistent planning models
with exponentially (in $n$) many feasible paths of different costs. It yields that in the deterministic model of Kleinberg and Oren (Definition~\ref{def:cost1}) there could be   exponentially many different costs of irrationality. 

To study the cost of irrationality $X_\beta$, we define the following computational problem.  

 \defsimpleproblem{\EstIrrCost (\EstIrrCostshort)}{A time-inconsistent planning model $M = (G, w, s, t, p,  \beta)$, and $W\geq 0$.} {Compute $\Pr (\costirrationality  \leq W)$.}

 We show in Theorem~\ref{theorem_ECIhard} that  \EstIrrCostshort is \classSharpP-hard. 
  Thus computationally,   \EstIrrCostshort is not easier than counting Hamiltonian cycles, counting perfect matching, satidying assignments, and all other  \classSharpP-complete problems. Our hardness proof strongly exploits the fact that the edge weight $w$ of the model are exponential in the $n$, the number of vertices of $G$. We show that when the edge weights are bounded by some polynomial of $n$, then  \EstIrrCostshort is solvable in polynomial time. We also obtain polynomial time algorithms, even for exponential weights, for the important ``border'' cases: minimum, maximum,  and average. More precisely, we prove that each of the following tasks
 \begin{itemize}
\item[(a)] finding the  minimum value $W$ such that $\Pr (\costirrationality  \leq W)$ is positive and computing  $\Pr (\costirrationality  \leq W)$ (Theorem~\ref{lemma:MinCI}), 
\item[(b)]   finding the  minimum value $W$ such that  $\Pr (\costirrationality  \leq W)=1$ (Theorem~\ref{lemma:MinCI}), and 
\item[(c)] computing $\Em(X_\beta)$ (Theorem~\ref{thm:expect}), 
\end{itemize}
can be done in polynomial time.
 
 We also take a look at  \EstIrrCostshort from the perspective of structural parameterized complexity. Structural parameterized complexity is the common tool in graph algorithms for analyzing intractable problems. Thus we are interested how the structure of the graph $G$ in the time-inconsistent model could be used to design efficient algorithms. For example, the problem of finding a maximum weight set of  independent  vertices is an NP-hard problem. However, it becomes tractable when the treewidth of the input graph is bounded. We will have more discussions about parameterized complexity in  Section~\ref{sectionParCom}. The most popular graph parameters in parameterized complexity are the treewidth of a graph, the size of a minimum feedback vertex set and vertex cover   \cite{CyganFKLMPPS15}. For a directed graph $G$, let $\tw(G)$, $\fvs(G)$, and $\vc(G)$  be the treewidth, the minimum size of a feedback vertex set and the minimum size of a vertex cover of the underlying undirected graph of $G$, correspondingly.  
 The natural interpretation of such parameters is that they in some sense capture the complexity of the decision-making. For example, when the feedback vertex set ($\fvs(G)$) equals to $0$, i.e. if the graph is a directed tree, then the problem becomes very simple---the agent always follows one initial plan. However, the complexity of the problem scales with $\fvs(G)$. Similarly,  the other structural parameters like $\vc(G)$  or $\tw(G)$ could also reflect the complexity of the problem.

 We prove the following 
 \begin{itemize}
\item   \EstIrrCostshort is   \classSharpP-hard even when in the  time-inconsistent planning model $M = (G, w, s, t, p,  \beta)$, we have $\tw(G)=2$. (This result actually follows directly from the reduction of Theorem~\ref{theorem_ECIhard}.)
\item  \EstIrrCostshort is  W[1]-hard parameterized by  $\fvs(G)$ and by $\vc(G)$ (Theorem~\ref{thm:w1-hard}).
On the other hand, \EstIrrCostshort is solvable in times $n^{\Oh (\fvs(G))}$ and    $n^{\Oh (\vc(G))}$.
\end{itemize}
On the other hand,  when parameterized by the minimum size of the feedback edge set of the underlying graph, $\fes(G)$, that is the set of edges whose removal makes the graph acyclic, the problem becomes  fixed-parameter tractable.


 Our results demonstrate that while computing the  cost of irrationality  is intractable in the worst-case, in many interesting situations this parameter could be computed efficiently.

\medskip\noindent\textbf{Related Work.}




The graph-theoretical model we use in this paper for time-inconsistent planning is due to~\cite{KleinbergO14,KleinbergO18}. We refer to these papers for  a survey of earlier work on time-inconsistent planning, with connections to procrastination, abandonment, and choice reduction.  There is a significant amount of the follow-up work on the  the model of Kleinberg and Oren.   Albers and Kraft~\cite{Albers2018} studied the ability to place rewards at nodes for motivating and guiding the agent. They show hardness and inapproximability results and provide an approximation algorithm whose performances match the inapproximability bound. The same authors considered another approach in~\cite{AlbersK17} for overcoming these hardness issues by allowing not to remove edges but to increase their weight. They were able to design a 2-approximation algorithm in this context. Tang et al.~\cite{tang2017computational} also proved hardness results related to the placement of rewards and showed that finding a motivating subgraph is NP-hard.    Fomin and Str{\o}mme~\cite{FominS20} studied the parameterized complexity of computing a motivating subgraph in the model of Kleinberg and Oren.

Gravin et al.~\cite{GravinILP16arch}   extended the model by considering the case where the degree of present bias may vary over time, drawn independently at each step from a fixed distribution. In particular, they described the structure of the worst-case graph for any distribution. They derived conditions on this distribution under which the worst-case cost ratio is exponential or constant. 
The difference with our work  is that in our model the present bias does not change, while the randomness comes from selecting among paths of equal perceived costs. However, our model can be easily mixed with the model of Gravin et al. by varying the present bias and selection rule at every step.  Another revision of the original model could be found in \cite{KleinbergOR16},  where agents estimate the degree $\beta$ of present bias erroneously---either underestimating or overestimating that degree. The behavior of such agents is  compared with the behavior of  ``sophisticated'' agents who 
are aware of their present-biased behavior in the future. 
 In~\cite{KleinbergOR17},   not only agents suffer from present biases, but also from \emph{sunk-cost} bias, i.e., the tendency to incorporate costs experienced in the past into one's plans for the future.  Albers and Kraft~\cite{AlbersK17b} considered a model with uncertainty, bearing similarities with~\cite{KleinbergOR16}, in which the agent is solely aware that the degree of present bias belongs to some set $B \subset (0, 1]$, and may or may not vary over time. In \cite{FominFG21} agents make their decisions by solving some optimization problem. 
 
 The combinatorial problem related to computing the cost of irrationality is  the problem of counting the number of weighted shortest $s$-$t$ path in an acyclic directed graph.
 The complexity and approximability of this problem was studied by 
Mihal{\'{a}}k et al. 
\cite{MihalakSW16}.

\section{Exponential number of different cost of irrationality}\label{sec:exp}
In this section we provide an example supporting our definition of the cost of irrationality (Definition~\ref{def:cost2}). In our construction, the agent following from $s$ to $t$ could follow one of exponentially many feasible paths of different final costs.  It implies that  the cost of irrationality in deterministic (Definition~\ref{def:cost1})  could vary exponentially depending on how  the agent selects between paths of equal perceived  costs. The construction we use to prove 
Theorem~\ref{thm:expnumpaths} is also used to obtain the complexity result. 

%
%
%
%
%
%
%

\begin{theorem}\label{thm:expnumpaths}
There is a graph with an exponential (in  the number of vertices) number of feasible paths of different costs.
\end{theorem}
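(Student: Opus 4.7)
The plan is to construct a ladder of $k$ diamond gadgets chained in series. Each diamond $i \in \{1,\dots,k\}$ consists of an entry vertex $x_i$, two parallel intermediate vertices $y_i^1, y_i^2$, and an exit vertex $x_{i+1}$, with arcs $x_i \to y_i^j$ and $y_i^j \to x_{i+1}$ for $j \in \{1,2\}$. Consecutive diamonds share endpoints, so the graph is a DAG on $n = 3k+1$ vertices with $s = x_1$ and $t = x_{k+1}$. Every $s$-$t$ path is determined by a binary sequence $(j_1,\dots,j_k)\in\{1,2\}^k$, so there are exactly $2^k$ candidate paths. I want to choose weights and a present-bias $\beta<1$ so that (i) all $2^k$ paths are feasible in the sense of Definition~\ref{def:cost2}, and (ii) their actual costs are pairwise distinct.

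For (i), both outgoing arcs at every $x_i$ must be the first edge of some minimum perceived cost path from $x_i$ to $t$. Since every $x_i$-$t$ path factors through $x_{i+1}$, and the perceived cost discounts all but the first edge by $\beta$, the perceived cost from $x_i$ along branch $j$ equals $w(x_i y_i^j) + \beta\bigl(w(y_i^j x_{i+1}) + D_{i+1}\bigr)$, where $D_{i+1}$ is the actual shortest-path distance from $x_{i+1}$ to $t$. The quantity $D_{i+1}$ is independent of $j$, so the feasibility condition at diamond $i$ reduces to the single local equation
\[
w(x_i y_i^1) + \beta\,w(y_i^1 x_{i+1}) = w(x_i y_i^2) + \beta\,w(y_i^2 x_{i+1}).
\]
For (ii), the actual cost of a path is a fixed offset plus a branch-dependent contribution summed over the diamonds. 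Taking $\beta = 1/2$ and, for some constant $c \geq 1$, setting $w(x_i y_i^1) = 2^{i+1}+c$, $w(y_i^1 x_{i+1}) = c$, $w(x_i y_i^2) = c$, $w(y_i^2 x_{i+1}) = 2^{i+2}+c$ makes both sides of the equation above equal to $2^{i+1} + \tfrac{3}{2}c$, and makes the two branches at diamond $i$ differ in actual cost by exactly $2^{i+1}$. Since distinct subsets of $\{2^{2}, 2^{3}, \dots, 2^{k+1}\}$ have distinct sums, the $2^k$ feasible paths then have pairwise distinct total costs, yielding $2^{\Omega(n)}$ different costs as required.

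The main obstacle is verifying that both branches at each $x_i$ actually realize the minimum perceived cost from $x_i$, not merely the smaller of the two specific local options. I would handle this by a backward induction on $i$: since the continuation from $x_{i+1}$ is identical under any choice at $x_i$, and since I have balanced the two local contributions in the displayed equation, the minimum perceived cost from $x_i$ is precisely the common value of the two branches, and is attained by both of them; therefore any probability distribution $p$ consistent with Definition~\ref{def:cost2} assigns positive probability to both outgoing arcs at $x_i$. The only remaining delicacy is arithmetic, namely to keep all weights in $\mathbb{N}$; this is why the uniform shift $c$ is included and why a rational $\beta$ such as $1/2$ is chosen, so that the balancing equation can be solved over positive integers without any further scaling.
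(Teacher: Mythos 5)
Your construction is correct and is essentially the paper's own proof: a chain of $\Theta(n)$ diamond gadgets whose two branches are balanced to have equal perceived cost $w(e_1)+\beta\,w(e_2)$ but actual costs differing by a distinct power of two, so the $2^{\Omega(n)}$ feasible paths all have distinct totals (the paper solves the same balancing equations symbolically for general $\beta<1$, while you fix $\beta=\tfrac12$ and give explicit integer weights, which suffices for the existence claim). The only nitpick is your final remark that \emph{any} admissible $p$ must put positive probability on both arcs at each $x_i$ --- the model's condition on $p$ is only one-directional --- but this is immaterial, since taking $p=\tfrac12$ on each tied arc is consistent and makes all $2^k$ paths feasible.
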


\begin{proof}
First we give an explicit construction of such a graph. 
  We construct  a gadget (see Fig.~\ref{Exp-Number}) with  4 vertices. 
From the gadgets we build an $n$-vertex graph by forming a sequential chain of $ \frac{n-1}{3} = \lfloor \frac{n}{3} \rfloor$ of such gadgets, see Fig.~\ref{Exp-Number}. We  arrange the weights on the arcs so that 
for the agent standing in a branching vertex $v$,  
the  {perceived cost to the target} $\zeta_M(v)$ is realized by a path following from $v$ through the two upper arcs of the gadget and through the lower arcs of the gadget. 
 To simplify the notation, we put the weights of the edges in the $i$-th gadget equal to $a_i, b_i, c_i$ and $d_i$. We want the weight satisfy the following equalities $c_i + d_i - (a_i+b_i) = 2^i$ and $a_i + b_i = A$. Then,  for a branching vertex $v$, the  {perceived cost} of every $v$-$t$ paths in the graph will be equal to  $\zeta_M(v)$. However, the real cost of the selected path would be $(c_i + d_i)$ or $(a_i + b_i)$, depending which part of the agent decide to follow. 
 Thus, for any integer $x\in \left[ A \cdot \lfloor \frac{n}{3} \rfloor,\ A \cdot \lfloor \frac{n}{3} \rfloor + 2^{\lfloor \frac{n}{3} \rfloor - 1} \right]$, the agent will be able to choose a path that has a real cost of $x$.

\begin{figure}[ht]
\center{\includegraphics[scale=0.18]{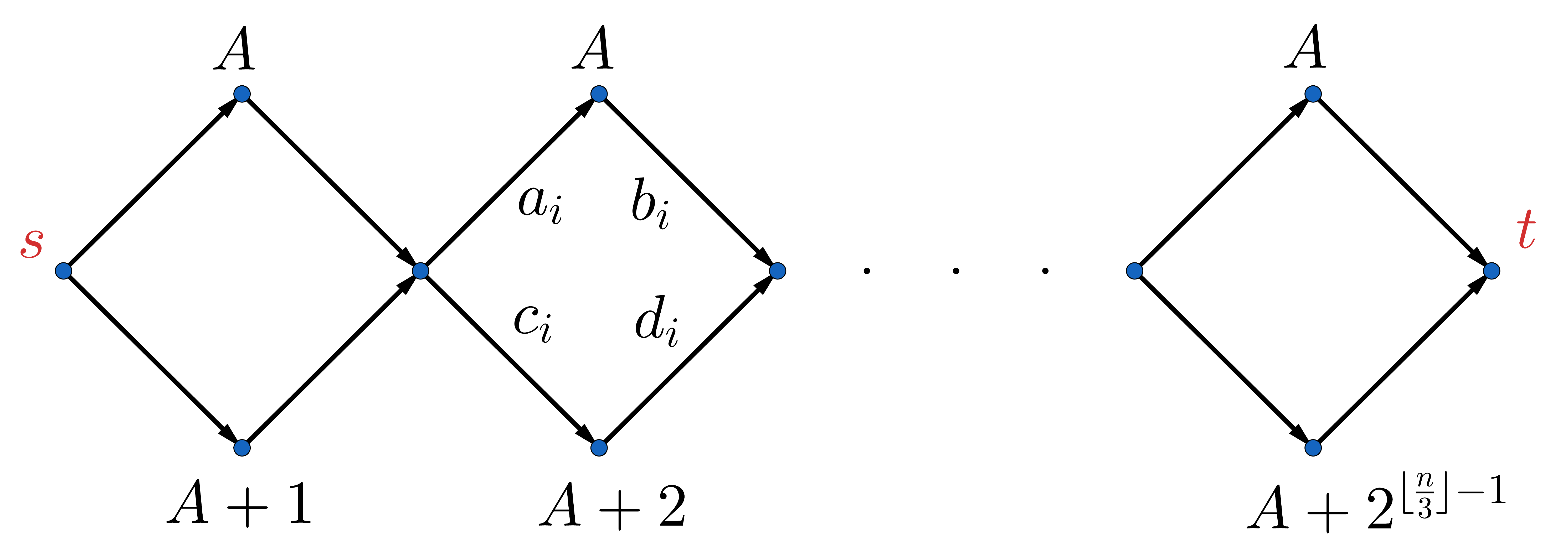}}
\caption{Exponential number of different cost of irrationality}
\label{Exp-Number}
\end{figure}

What remains is to show that we can select the values of $a_i, b_i, c_i, d_i$ and $A$. For $\beta\leq 1$, these numbers should satisfy the following conditions.

\[\begin{cases}
a_i + \beta b_i = c_i + \beta d_i \\
c_i + d_i - a_i - b_i = 2^i \\
a_i + b_i = A
\end{cases} 
\]
The above system has infinitely many solutions. 
%
For example, a solution to such a system is any set of the form:
$$\begin{cases}
a_i = A-1 \\
b_i = 1 \\
d_i = 1 + \dfrac{2^i}{1-\beta} \\
c_i = A - 2^i \cdot \dfrac{\beta}{1-\beta} - 1
\end{cases}$$
We have to be a bit careful to select the constant A  in such a way that the weights of all edges are non-negative. Thus we put
$$ A > b + 2^i \cdot \dfrac{\beta}{1-\beta} \implies A > b + 2^{\frac{n}{3}} \cdot \dfrac{\beta}{1-\beta}.$$

\end{proof}

\begin{remark}
By Theorem \ref{thm:expnumpaths},   the difference between the costs of the minimum and maximum feasible paths in the graph can be exponential from the number of vertices.
\end{remark}

\section{Estimating the cost of irrationality}\label{sec:estcost}
In this section, we will evaluate the complexity of estimating a random value $\costirrationality$. We give a parsimonious reduction of the following problem to  \EstIrrCostshort.



\defsimpleproblem{\CountPar}{Set of positive integers $S = \{s_1,\ldots,s_n\}$.} {Count the number of partitions of $S$ into sets $S_1$ and $S_2$ such that the sums of numbers in both sets are equal.}
 
\CountPar is known to  be \classSharpP-hard \cite{DyerFKKPV93}.


\begin{theorem}\label{theorem_ECIhard}
    The  \EstIrrCostshort  problem is  \classSharpP-hard.
\end{theorem}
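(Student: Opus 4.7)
The plan is to adapt the gadget chain from Theorem~\ref{thm:expnumpaths}, replacing each target difference $2^i$ by the input integer $s_i$, so that each of the $2^n$ feasible $s$-$t$ paths has actual cost $nA + \sum_{i\in T}s_i$ for some subset $T\subseteq[n]$, and every such path is traversed with probability $2^{-n}$. Concretely, fix $\beta=1/2$ and for each $i$ build a $4$-vertex gadget with upper arcs $(a_i,b_i)=(A-1,1)$ and lower arcs $(c_i,d_i)=(A-s_i-1,\,1+2s_i)$, where $A:=1+\max_i s_i$ guarantees non-negative integer weights. A direct check shows $a_i+\beta b_i=c_i+\beta d_i$, and, inductively over the chain, every branching vertex $v_i$ sees exactly two perceived-minimum outgoing arcs (first edges $a_i$ and $c_i$) leading to $t$. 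Assign transition probability $1/2$ to each of them.

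With this model, each subset $T\subseteq[n]$ of ``lower'' choices corresponds to a unique feasible path of real cost $nA+\sum_{i\in T}s_i$, realized with probability $2^{-n}$. The shortest $s$-$t$ path is the all-upper path of cost $d(s,t)=nA$, so $X_\beta = 1 + \tfrac{1}{nA}\sum_{i\in T}s_i$. Setting $\sigma:=\sum_i s_i$ and querying the \EstIrrCostshort oracle at
\[
W_1 \;:=\; 1 + \frac{\sigma/2}{nA}, \qquad W_2 \;:=\; 1 + \frac{\sigma/2 - 1}{nA},
\]
yields
\[
2^n\bigl(\Pr(X_\beta\le W_1)-\Pr(X_\beta\le W_2)\bigr) \;=\; \bigl|\{T\subseteq[n]:\textstyle\sum_{i\in T}s_i=\sigma/2\}\bigr|,
\]
which equals twice the number of equal-sum partitions of $S$ (or is $0$ when $\sigma$ is odd). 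Since \CountPar is \classSharpP-hard, so is \EstIrrCostshort.

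The main technical obstacle is ensuring that the ``independence'' of the gadget-level choices carries over globally: one must verify that at every branching vertex the two candidate arcs indeed both extend to global $s$-$t$ paths of the same perceived cost, which is precisely what the equation $a_i+\beta b_i=c_i+\beta d_i$ yields inductively when combined with the fact that continuations beyond each gadget merge at a common vertex. A secondary point is integrality of the weights for arbitrary inputs $s_i$, which is why we pin $\beta=1/2$ and use $A$ linear in $\max_i s_i$; any other rational $\beta$ would work after clearing denominators. Finally, strengthening the two-query Turing reduction above to a single-query (parsimonious) reduction can be arranged by scaling the $s_i$'s and shifting the threshold so that only the subsets of sum exactly $\sigma/2$ fall in a narrow cost window cut out by a single value $W$, but even the Turing reduction already establishes \classSharpP-hardness.
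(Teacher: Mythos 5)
Your proposal is correct and follows essentially the same route as the paper: a chain of diamond gadgets encoding an instance of \CountPar, with the perceived costs of the two branches equalized so that all $2^n$ $s$-$t$ paths are feasible with probability $2^{-n}$, and the number of balanced partitions recovered by differencing two \EstIrrCostshort queries --- the only cosmetic difference is that you pin $\beta=\tfrac12$ and use nonnegative integer weights directly, whereas the paper uses weights $s_i,\tfrac{W-s_i}{\beta}$ and $-s_i,\tfrac{W+s_i}{\beta}$ and shifts away the negative weights. One small nit: when $\sigma$ is odd your displayed difference is not $0$ but counts subsets of sum $\lfloor\sigma/2\rfloor$; this is harmless since the reduction can check the parity of $\sigma$ first and output $0$ without consulting the oracle.
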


\begin{proof}
Let's reduce the \CountPar to our problem. For an instance $S = \{s_1,\ldots,s_n\}$ 
we construct a time-inconsistent planning model $M = (G, w, s, t, p,  \beta)$. Every  $s$-$t$ paths in $G$ will be  feasible and there will be a bijection between feasible paths of certain cost in $G$ and partitions of the set $S$ into two parts. Thus the number of feasible paths will be the solution to 
 \CountPar.
 
  Our construction works for any present bias $\beta<1$.   Consider a graph consisting of ``diamond'' gadgets.  The diamond consists of $2$ vertices connected by $2$ paths of length $2$, see Fig.~\ref{fig_gadget}. The graph consists of $n$ diamonds $D_1, \dots, D_n$ concatenated together. 
  The weights of the edges are defined as follows. Let $W$ be an integer that is greater than all $s_i$. For every $i\in \{1,\dots, n\}$, 
  the edges of the first path of the diamond $D_i$ obtain weights  $s_i$ and  $\frac{W-s_i}{\beta}$. The edges of the second  path of the diamond $D_i$ obtain weights  $-s_i$ and  $\frac{W+s_i}{\beta}$. This completes the construction of $G$.
  
  We also add that we can get rid of the negative weights of the edges by adding the same additive to all the edges, it is easy to understand that the agent's solution will not change from this additive.
  
  Let us note that for the agent standing in the first vertex $v$ of a diamond $D_i$ there are exactly two perceived paths, the first of which starts with the upper ($s_i$)-path of the diamond $D_i$, the second with the bottom ($-s_i$)-path, and both of them continue with the upper (shortest) paths of all the remaining diamonds.
  In the Markov decision process, we assume that the agent select one of these edges with probability $p=1/2$. Since each of the $s$-$t$ paths in model $M$ is feasible, each of these paths will be used with probability $(1/2)^n$.
  
    \begin{figure}[ht]
    \center{\includegraphics[scale=0.18]{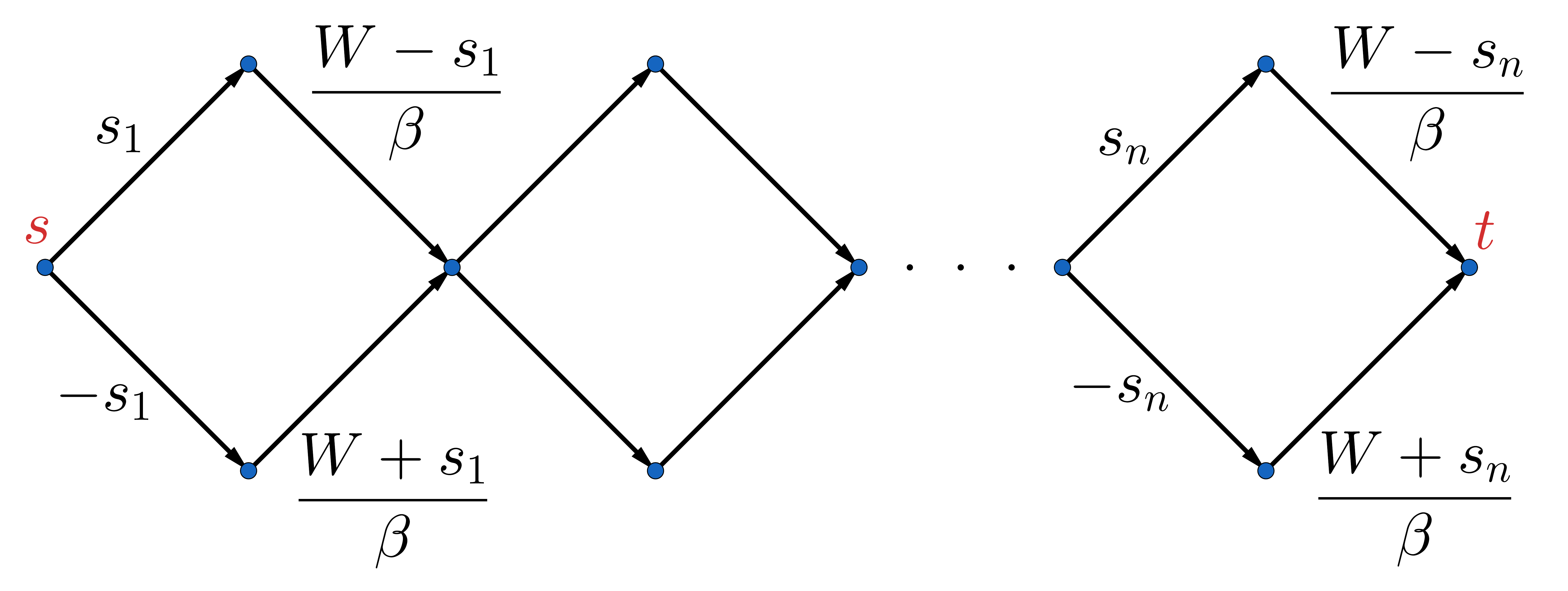}}
    \caption{Gadget used in the proof of the theorem.}\label{fig_gadget}
    \end{figure}
%

    Now let us show the bijection between feasible paths of cost $\frac{n\cdot W}{\beta}$ and 
   equal 
    partitions of $S$.  
    
    In one direction, let  $A$ and $B$ be a partition of $S$ such that 
    $\sum_{s\in A} s=\sum_{s\in B} s$. 
We take the path  corresponding to this  partition. When passing through diamond $D_i$, the path goes through  edge of weight $s_i$ is $s_i\in A$ and $-s_i$ otherwise.
We define \[
    \begin{cases}
   \delta= 0,\text{ if } s_i \in A \\
    \delta = 1,\text{ if } s_i \in B.
    \end{cases}
    \]
  Then the total cost of such a path is equal to
    $$
	\sum_{i=1}^n (-1)^\delta \cdot s_i + \sum_{i=1}^n \frac{W + (-1)^{\delta+1} \cdot s_i}{\beta} = \frac{n \cdot W}{\beta}.
    $$
 
    
  For the opposite direction.   Let $P$ be a path of cost $\frac{n\cdot W}{\beta}$. The way $P$ traverses through each of the diamonds,  specify a partition of $S$ into two sets $A$ and $B$. We claim that  $\sum_{s\in A} s=\sum_{s\in B} s$. Targeting towards a contradiction, assume that
  $Q=\sum_{s\in A} s-\sum_{s\in B} s>0$. (The arguments for $Q=\sum_{s\in A} s-\sum_{s\in B} s<0$ are similar.)
  
  Then the cost of $P$  is equal to
    \begin{eqnarray*}
    \sum_{i=1}^n (-1)^\delta \cdot s_i &+& \sum_{i=1}^n \frac{W + (-1)^{\delta+1} \cdot s_i}{\beta}  \\ &=&  Q + \frac{n \cdot W - Q}{\beta} 
     < \frac{n \cdot W }{\beta}. 
   \end{eqnarray*}
         But this is a contradiction to our assumption that the cost of $P$ is $\frac{n \cdot W }{\beta}$.
         
    
  We have constructed a parsimonious reduction of the partitioning problem to the problem of counting the number of feasible paths of   cost $\frac{n \cdot W }{\beta}$. Thus, by counting the number of different feasible paths of  cost  $\frac{n \cdot W}{\beta}$, we can count the number of different solutions for the partition problem.

   Let $T= \frac{n \cdot W}{\beta}$. 
 We already established that counting the number of $s$-$t$ paths of cost $T$ in $G$ is    \classSharpP-hard. Now  
   we show  that computing $\Pr(C_{\beta} \leq T)$ is 
   \classSharpP-hard. Note that in our graph all paths are feasible to the agent. Thus each of the paths will be traversed by the agent with the same probability $(\frac{1}{2})^n$.
   Let 
   $P_{\leq T}$ be the number of paths of length at most $T$ and  $P_{=T}$ be the number of paths of length exactly  $T$. 
   Then 
    \begin{eqnarray*}
    \Pr(C_{\beta} \leq T) &=& \frac{P_{\leq T}}{2^n} = \frac{P_{=T} + P_{\leq T-1}}{2^n}\\ &=& \frac{P_T}{2^n} + \Pr(C_{\beta} \leq T-1) .
    \end{eqnarray*}
Therefore,  the existence of a  polynomial time algorithm computing $\Pr(C_{\beta} \leq T)$ would allow us to count in polynomial time the number of paths of cost $T$. 
   
 Finally, let us remind that   $\costirrationality=  \frac{C_\beta}{d(s,t)}$. Since the minimum cost 
 $d(s,t)$ is computable in polynomial time by making use of the Bellman-Ford algorithm, a polynomial time algorithm computing   $\Pr(\costirrationality\leq  T/d(s,t))$ would allow to compute in polynomial time $Pr(C_{\beta} \leq T)$, which is \classSharpP-hard.
%
%
%
%
\end{proof}

Although in general  \EstIrrCostshort appears to be a difficult problem, in some interesting cases described below, it can be solved in polynomial time. We   define the following two ``extremal'' cases of the problem. In the first one we estimate the probability that the agent will follow one of the feasible paths of minimum cost. The second is to compute the maximum cost of a feasible path. 

 \defsimpleproblem{\MinIrrCost (\MCI) }{A time-inconsistent planning model $M = (G, w, s, t, p,  \beta)$.} {Compute  the minimum value $W$ such that $\Pr(\costirrationality \leq W) > 0$ and compute $\Pr(\costirrationality \leq W)$. }
 
  \defsimpleproblem{\MaxIrrCost}{A time-inconsistent planning model $M = (G, w, s, t, p,  \beta)$.} {Compute  the minimum value $W$ such that $\Pr(\costirrationality \leq W) =1$.}
 
%
%

\begin{theorem}\label{lemma:MinCI}
\MinIrrCost and  \MaxIrrCost admits an algorithm with running time $\Oh (n^3)$.
\end{theorem}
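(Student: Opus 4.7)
The plan is to reduce both \MinIrrCost and \MaxIrrCost to standard DAG shortest/longest path computations on the subgraph $G' \subseteq G$ consisting of all arcs $uv$ with $p(u,v) > 0$. By definition of the model, an $s$-$t$ path is feasible precisely when all of its arcs lie in $G'$, and the probability that the agent follows a given feasible path $P$ equals the product $\prod_{e \in P} p(e)$. Consequently, \MaxIrrCost asks for the maximum $w$-cost of an $s$-$t$ path in $G'$ (divided by $d(s,t)$), while \MinIrrCost asks for the minimum such cost together with the total probability mass on minimum-cost feasible paths.

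I would begin with a preprocessing phase. Processing vertices in reverse topological order, first compute $d(u,t)$, the $w$-distance from $u$ to $t$ in $G$, and then the perceived cost $\zeta_M(v) = \min_{vu \in E(G)} \{ w(vu) + \beta \cdot d(u,t) \}$ at every vertex $v$; each sweep takes $\Oh(n + |E(G)|) = \Oh(n^2)$ time. By the input constraint on $p$, any arc $uv$ in $G'$ must satisfy $w(uv) + \beta \cdot d(v,t) = \zeta_M(u)$, so $G'$ is read off in linear time. The stochasticity conditions $p(u,v) \geq 0$ and $\sum_v p(u,v) = 1$ guarantee that every vertex $u \neq t$ reachable from $s$ in $G'$ has at least one outgoing arc in $G'$, so the feasible $s$-$t$ paths are exactly the $s$-$t$ paths of $G'$.

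For \MaxIrrCost, compute $M(v)$, the maximum $w$-cost of a $v$-$t$ path in $G'$, via $M(t)=0$ and $M(v) = \max_{vu \in G'} \{ w(vu) + M(u) \}$ in reverse topological order, and output $M(s)/d(s,t)$. For \MCI, analogously compute $d'(v)$, the minimum $w$-cost of a $v$-$t$ path in $G'$, via DAG shortest path; the required threshold is $d'(s)/d(s,t)$. To obtain its probability, let $q(v)$ be the probability that, starting at $v$, the agent follows a $v$-$t$ path in $G'$ of $w$-cost exactly $d'(v)$. Then $q(t) = 1$ and
\[
q(v) \;=\; \sum_{\substack{vu \in G' \\ w(vu) + d'(u) = d'(v)}} p(v,u) \cdot q(u),
\]
again evaluated in reverse topological order; the answer is $q(s)$. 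Each stage runs in $\Oh(n^2)$ time, comfortably within the claimed $\Oh(n^3)$ bound.

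The only conceptual subtlety, and hence the main place where care is needed, is in disentangling the perceived costs $\zeta_M$ (which decide whether an arc can belong to $G'$ at all) from the actual costs $w$ (along which we then optimize and accumulate probabilities within $G'$). Once this distinction is cleanly handled and $G'$ is constructed, the remainder reduces to textbook DAG dynamic programming and no further obstacles arise.
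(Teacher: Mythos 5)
Your proposal is correct and follows essentially the same approach as the paper: restrict to the subgraph of arcs the agent may actually traverse and run a DAG dynamic program computing the extreme path cost together with the probability mass on minimum-cost paths (the paper's Algorithm~\ref{alg:MCR} does this forward from $s$, yours backward from $t$, which is an immaterial difference). You also spell out the \MaxIrrCost case that the paper omits as ``similar,'' and your correctness and $\Oh(n^3)$ accounting are sound.
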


\begin{proof}
We show how to compute in polynomial time     the minimum value $W$ such that $\Pr(C_\beta \leq W) > 0$ and how to compute $\Pr(C_\beta \leq W) $. Since  $\costirrationality=  \frac{C_\beta}{d(s,t)}$ and $d(s,t)$ is computable in polynomial time, this also would yield a polynomial time algorithm solving \MCI.  The algorithm for \MaxIrrCost is similar, and we do not provide it here. 

We will traverse the vertices in the order of their topological sorting and for each of them calculate two values: $D_v$, the length of the shortest path from $s$ to $v$ that the agent chooses, and $P_v$,  the probability that the agent arrived at the vertex $v$ along the path of cost $D_v$. See Algorithm~\ref{alg:MCR}.

It is possible to make a topological sorting of the vertices of the graph $G$ and obtain an array $A$ in time $\Oh(n^2)$. Note that, having first counted the shortest paths in the graph between any pair of vertices in time $\Oh(n^3)$, for each vertex $v$ we can find the set $U$ from line 5 of the Algorithm~\ref{alg:MCR} in time $\Oh(n^2)$, going through all ancestors of $v$, and for each of them by modeling the agent's estimate in linear time.

In line 6 of our algorithm, for each vertex, the minimum path cost $D_v$ will be calculated, since we take the minimum over all vertices from which the agent can get to $v$.
And in line 7 all neighbors of $v$ are considered, passing through which the agent can achieve the minimum cost $D_v$, calculated in the previous line. Since the events of arrival at the vertex from different neighbors are inconsistent, the total probability of getting to the vertex $v$, having spent $D_v$, is calculated as the sum of the probabilities for all such neighbors.

Thus, we get the total running time of the algorithm equal to $\Oh(n^3)$.

\begin{algorithm}[tb]
\caption{Dynamic programming for \MCI}
\label{alg:MCR}
\textbf{Input}: $M = (G, w, s, t, p,  \beta)$\\
\textbf{Output}: $W,\ \Pr(C_\beta \leq W)$
\begin{algorithmic}[1] 
\STATE Let $A$ -- an array of topologically sorted vertices, $s = A[0],\ t = A[n]$
\STATE $D_s = 0$
\STATE $P_s = 1$
\FOR{$v \in A$}
\STATE $U$ - the set of neighbors of the vertex $v$, such that from $u$ the agent can go to $v$
\STATE $D_v := \min_{u \in U} (w(u,v) + D_u)$
\STATE $P_v := \sum_{u \in U'} P_u \cdot \Pr(u \to v)$, where 
$U' = \argmin_u (w(u,v) + D_u)$
\ENDFOR
\STATE \textbf{return} $D_t,\ P_t$
\end{algorithmic}
\end{algorithm}


\end{proof}

%
%

Finally we prove that if the weights of edges are polynomial in $n$, then  \EstIrrCostshort is solvable in polynomial time. 

\begin{theorem}
 \EstIrrCostshort  admits an algorithm with running time $\Oh( \lfloor W \cdot d(s,t) \rfloor \cdot n^2 + n^3)$.
\end{theorem}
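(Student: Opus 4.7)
The plan is to extend the dynamic programming scheme of Theorem~\ref{lemma:MinCI} by tracking, at each vertex, not just the total probability mass arriving with minimum cost, but the entire distribution of accumulated actual costs up to the threshold $T = \lfloor W \cdot d(s,t) \rfloor$. Since $\costirrationality \leq W$ is equivalent to $C_\beta \leq T$, it suffices to compute $\Pr(C_\beta \leq T)$ and the answer is the prefix sum $\sum_{k=0}^{T} P_t[k]$ of the resulting table entry for $t$.

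First I would precompute, for every vertex $v$, the value $d(v,t)$ of a shortest $v$-$t$ path in the DAG in time $\Oh(n^2)$ using reverse topological order (or $\Oh(n^3)$ if all-pairs are computed up front). From these values, for every edge $uv \in E(G)$ I decide whether $uv$ is the first edge of some perceived-optimal $u$-$t$ path by checking $w(uv) + \beta \cdot d(v,t) = \zeta_M(u)$, where $\zeta_M(u) = \min_{uv \in E} \bigl(w(uv) + \beta \cdot d(v,t)\bigr)$. Call an edge \emph{live} if this equality holds; only live edges carry positive probability $p(u,v)$. This preprocessing takes $\Oh(n^2)$ time once $d(\cdot,t)$ is known.

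Next, for each vertex $v$ taken in topological order and for each integer $k \in \{0,1,\ldots,T\}$, I compute
\[
P_v[k] \;=\; \sum_{\substack{u:\, uv \text{ live}\\ k \geq w(uv)}} P_u\bigl[k - w(uv)\bigr] \cdot p(u,v),
\]
with the base case $P_s[0] = 1$ and $P_s[k] = 0$ for $k > 0$. The correctness is a straightforward induction: $P_v[k]$ equals the probability that the random process describing the agent's walk arrives at $v$ having accumulated actual cost exactly $k$, because the only live edges are those along which the agent moves with nonzero probability, and the Markov property lets us multiply $P_u[k-w(uv)]$ by $p(u,v)$ and sum over disjoint predecessor events. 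Filling a single entry $P_v[k]$ costs $\Oh(\deg^{-}(v))$; summed over all $v$ and all $k$ this is $\Oh(T \cdot m) = \Oh(T \cdot n^2)$. The final answer $\Pr(\costirrationality \leq W) = \sum_{k=0}^T P_t[k]$ can be read off in $\Oh(T)$ time, giving an overall bound of $\Oh(T \cdot n^2 + n^3) = \Oh(\lfloor W \cdot d(s,t) \rfloor \cdot n^2 + n^3)$.

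The only subtle point, and the step I would double-check most carefully, is that the DP really enumerates exactly the \emph{feasible} paths in the sense of Definition~\ref{def:cost2}: feasibility is a \emph{local} property of the agent's myopic evaluation at each vertex, and restricting transitions to live edges captures precisely the support of $p$, so the probability mass computed matches the law of $C_\beta$. Once this is justified, the running time is immediate, and in particular if all edge weights are bounded by a polynomial in $n$ then so is $d(s,t)$, so the algorithm runs in polynomial time, as claimed.
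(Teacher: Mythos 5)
Your proposal is correct and follows essentially the same route as the paper: a topological-order dynamic program maintaining, for each vertex $v$, the array $P_v[k]$ of probabilities of reaching $v$ with accumulated cost $k\le \lfloor W\cdot d(s,t)\rfloor$, after an $\Oh(n^3)$ shortest-path preprocessing to identify the edges the agent may traverse, yielding the stated $\Oh(\lfloor W\cdot d(s,t)\rfloor\cdot n^2+n^3)$ bound. Your explicit criterion $w(uv)+\beta\cdot d(v,t)=\zeta_M(u)$ for the agent's admissible (``live'') first edges is exactly what the paper means by ``modeling the agent's estimate,'' so no substantive difference remains.
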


\begin{proof}
We will traverse the vertices in the order of their topological sorting. For each vertex $v$, we will calculate the array $P_v$, numbered $0, \ldots, \lfloor W \cdot d(s,t) \rfloor$, where cell $P_v[k]$ will store the probability that the agent arrived at the vertex $v$ along the path of cost k. See Algorithm~\ref{alg:ECI}.

It is possible to make a topological sorting of the vertices of the graph $G$ and obtain an array $A$ in time $\Oh (n^2)$. Note that having first counted the shortest paths in the graph between any pair of vertices in time $\Oh(n^3)$, for each vertex $v$ we can find the set $U$ from line 5 of the Algorithm~\ref{alg:ECI} in time $\Oh(n^2)$, going through all ancestors of $v$, and for each of them by modeling the agent's estimate in linear time.

Note that in line 7 of our algorithm, for each vertex $v$ and each possible cost of the path $k$, the probability that the agent will arrive at the vertex $v$ along the path of cost $k$ is correctly calculated, since the events of arrival at the vertex from different neighbors are inconsistent and the total probability of getting to the vertex $v$ is calculated as the sum of the probabilities for all available neighbors.

So the total running time of the algorithm is $\Oh (n^3) + \Oh ( \lfloor W \cdot d(s,t) \rfloor \cdot n^2)$.

\begin{algorithm}[tb]
\caption{Dynamic programming for  \EstIrrCostshort }
\label{alg:ECI}
\textbf{Input}: $M = (G, w, s, t, p,  \beta)$, $W \geq 0$\\
\textbf{Output}: $\Pr( C_\beta \leq \lfloor W \cdot d(s,t)\rfloor \, )$
\begin{algorithmic}[1] 
\STATE Let $A$ -- an array of topologically sorted vertices, $s = A[0],\ t = A[n]$
\STATE $P_s = [1, 0, 0, \dots , 0]$
\STATE $P_v = [0, 0, 0, \dots , 0]\ \forall v \neq s$
\FOR{$v \in A$}
\STATE $U$ - the set of neighbors of the vertex $v$, such that from $u$ the agent can go to $v$
\FOR{$k := 0$ to $\lfloor W \cdot d(s,t)\rfloor $}
\STATE $P_v[k] := \sum_{U'} P_u[k - w(u,v)] \cdot \Pr(u \to v)$, where $U' = \{u \in U \ |\ w(uv) \leq k\}$
\ENDFOR
\ENDFOR
\STATE \textbf{return} $\sum_{k=0}^{\lfloor W \cdot d(s,t)\rfloor } P_t[k]$
\end{algorithmic}
\end{algorithm}

\end{proof}

\section{Computing expected cost}\label{sec:expcost}
We proved that the \EstIrrCostshort problem is $\# P$-hard. In this section we show that computing the expectation $\Em(\costirrationality)$ and the variance $\Var(\costirrationality)$ of  random variable $\costirrationality$ can be done in polynomial time.

%
%
%
%
%
\begin{theorem}\label{thm:expect}
For the input $M = (G, w, s, t, p,  \beta)$ of  \EstIrrCostshort, the values 
 $\Em(\costirrationality)$ and  $\Var(\costirrationality)$  are computable in time
  $\Oh (n^3)$.
\end{theorem}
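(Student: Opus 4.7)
Since $\costirrationality = C_\beta / d(s,t)$ and $d(s,t)$ can be obtained in $\Oh(n^2)$ time in a DAG, we have $\Em(\costirrationality) = \Em(C_\beta)/d(s,t)$ and $\Var(\costirrationality) = \Var(C_\beta)/d(s,t)^2$, so it suffices to compute $\Em(C_\beta)$ and $\Em(C_\beta^2)$ in $\Oh(n^3)$ time. The plan is a straightforward linear-order dynamic programming that mirrors Algorithm~\ref{alg:MCR}, but tracks probability-weighted moments of the cost-so-far instead of a single minimum.

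First I would preprocess the \emph{active arcs}: those $uv \in E(G)$ with $p(u,v) > 0$. By Kleinberg--Oren's definition, $uv$ is active exactly when $w(uv) + \beta \cdot d(v,t)$ equals $\zeta_M(u) = \min_{uu' \in E} \bigl( w(uu') + \beta \cdot d(u',t) \bigr)$. Computing $d(v,t)$ for every $v$ takes $\Oh(n^2)$ on a DAG (one backward sweep in reverse topological order), and then a single pass through the arcs identifies all active arcs in $\Oh(n^2)$ time. From this point on the agent's trajectory is a Markov chain on $G$ that uses only active arcs and whose transition probabilities $p(u,v)$ are given.

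Next, process the vertices in topological order $s = v_1, v_2, \ldots, v_n = t$ and maintain three quantities:
\begin{align*}
q_v &= \Pr[\text{agent visits } v], \\
E_v &= \Em\bigl[ X_v \cdot \mathbb{1}[\text{visits } v] \bigr], \\
F_v &= \Em\bigl[ X_v^2 \cdot \mathbb{1}[\text{visits } v] \bigr],
\end{align*}
where $X_v$ denotes the (random) cost accumulated along the prefix of the agent's path up to $v$. Initialize $q_s = 1$, $E_s = F_s = 0$. Conditioning on the predecessor $u$ from which $v$ was entered (these events are disjoint) and using $X_v = X_u + w(uv)$, the recurrences are
\begin{align*}
q_v &= \sum_{u} p(u,v)\, q_u, \\
E_v &= \sum_{u} p(u,v) \bigl( E_u + w(uv)\, q_u \bigr), \\
F_v &= \sum_{u} p(u,v) \Bigl( F_u + 2\, w(uv)\, E_u + w(uv)^2 q_u \Bigr),
\end{align*}
where each sum ranges over active in-arcs of $v$. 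Since $G$ is a DAG and the agent always reaches the sink (from any vertex, an active arc launches a perceived path to $t$), we get $q_t = 1$, $\Em(C_\beta) = E_t$, and $\Em(C_\beta^2) = F_t$; finally $\Var(C_\beta) = F_t - E_t^2$.

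The dominant cost is the one-time all-pairs or single-target shortest-path computation used to isolate active arcs, which is $\Oh(n^3)$ if one uses Floyd--Warshall (and $\Oh(n^2)$ suffices if only $d(\cdot, t)$ is needed, giving $\Oh(n^2)$ for the DP itself because each arc is scanned once with constant work). The only conceptual point needing care is the recurrence for $F_v$: the cross-term $2\, w(uv)\, E_u$ arises from expanding $(X_u + w(uv))^2$ inside the expectation, and it is crucial that $E_u$ and $q_u$ already incorporate the indicator $\mathbb{1}[\text{visits } u]$ so that the conditioning on the predecessor is handled correctly without double-counting. I expect this bookkeeping — cleanly separating probabilities, first moments, and second moments conditioned on reaching each vertex — to be the only subtle step; everything else is a direct topological-order sweep.
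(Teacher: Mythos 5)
Your proposal is correct and follows essentially the same route as the paper: a topological-order dynamic program over the arcs the agent can actually use, propagating the visit probability, the first moment, and the second moment of the accumulated cost with exactly the recurrences of the paper's Algorithm for $\Em$ and $\Var$, then recovering $\Var(C_\beta)=\Em(C_\beta^2)-\Em(C_\beta)^2$ and dividing by $d(s,t)$ and $d(s,t)^2$. Your explicit use of the indicator $\mathbb{1}[\text{visits } v]$ only makes precise the bookkeeping that the paper's proof leaves implicit.
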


\begin{proof}
We will calculate the mean and variance for the random variable $C_\beta$, and then we can easily obtain the desired values by following: $\Em(X_\beta) = \frac{\Em(C_\beta)}{d(s,t)}$ and $\Var(X_\beta) = \frac{\Var(C_\beta)}{d^2(s,t)}$.

We will calculate sequentially for each vertex the values of three quantities. To determine them, let's fix a vertex $v$ and define a random variable $X$ as the cost of the path traveled by the agent to a vertex $v$. Then, for this vertex, the values will be determined as follows:
\begin{itemize}
    \item the probability that the agent will arrive at the $v$: $\Pr(\text{come to}\ v)$;
    \item $E_v := \Em (X)$;
    \item $D_v := \Em (X^2)$.
\end{itemize}
Note that with such a definition at the vertex $t$ the random variable $X$ will coincide with $C_\beta$, which we are interested in.
See the Algorithm~\ref{alg:expectation}.

\begin{algorithm}[tb]
\caption{Dynamic programming for  $\Em$ and $\Var$ }
\label{alg:expectation}
\textbf{Input}: $M = (G, w, s, t, p,  \beta)$\\
\textbf{Output}: $\Em(\costirrationality)$, $\Var(\costirrationality)$
\begin{algorithmic}[1] 
\STATE Let $A$ -- an array of topologically sorted vertices, $s = A[0],\ t = A[n]$
\STATE $\Pr[\text{come to}\ s] = 1$
\STATE $E_s = 0$
\STATE $D_s = 0$
\FOR{$v \in A$}
\STATE $U$ - the set of neighbors of the vertex $v$, such that from $u$ the agent can go to $v$
\STATE $\Pr(\text{come to}\ v) = \sum_{u \in U} \Pr(\text{come to}\ u) \cdot \Pr(u \rightarrow v)$
\STATE $E_v =  \sum_{u \in U} \Pr(u \rightarrow v) \cdot E_u\ +\ w(uv) \cdot \Pr(u \rightarrow v) \cdot \Pr(\text{come to}\ u)$
\STATE $D_v = \sum_{u \in U} \Pr(u \rightarrow v) \cdot D_u\ +\ w^2(uv) \cdot \Pr(u \rightarrow v) \cdot \Pr(\text{come to}\ u)\ +$

\hfill$+\ 2 E_u \cdot w(uv) \cdot \Pr(u \rightarrow v)$
\ENDFOR
\STATE \textbf{return} $E_t$, $D_t - E_t^2$
\end{algorithmic}
\end{algorithm}

It is possible to make a topological sorting of the vertices of the graph $G$ and obtain an array $A$ in time $\Oh (n^2)$.
Note that, having first counted the shortest paths in the graph between any pair of vertices in time $\Oh(n^3)$, for each vertex $v$ we can find the set $U$ from line 6 of the Algorithm~\ref{alg:expectation} in time $\Oh(n^2)$, going through all ancestors of $v$, and for each of them by modeling the agent's estimate in linear time.

It is easy to see that the probability to come to the vertex $v$ is correctly calculated in line 7 of our algorithm, since it is presented as a sum of inconsistent events --- arrivals from different ancestors of the vertex.
The validity of the expression written in line 8 immediately follows from the representation of the mathematical expectation: $\Em(X) = \sum_{\text{path}} \Pr(\text{path}) \cdot \text{Cost}({\text{path}}) $. 
To obtain the formula from line 9 we use the definition of the mathematical expectation of a random variable $X^2:\ \Em(X^2) = \sum_{\text{path}} \Pr(\text{path}) \cdot \text{Cost}^2(\text{path})$, and then we split each $s$-$v$ path into two pieces: $s$-$u$ + $u$-$v$.
And finally, we get the variance of the random variable using the following well-known formula: $\Var (C_\beta) = \Em (C_\beta^2) - \Em^2 (C_\beta)$.

Thus, we get the total running time of the algorithm equal to $\Oh(n^3)$.
\end{proof}

The algorithms for the mean and the variance could be useful to \emph{motivate} the agent. 
Let us consider the situation when for   { time-inconsistent planning model} $M = (G, w, s, t, p, \beta)$, we can  choose a reward to motivate an agent to achieve a goal (target vertex). 
At every step, the  agent decides whether he wants to proceed further based on the following estimations. 
The agent   compares the perceived cost of the remaining tasks, taking into account the present bias, and the reward: if the reward is greater than the estimate of the remaining path, then the agent moves further, otherwise he \emph{abandon}  his attempts to reach the goal. Then the natural algorithmic question in time-inconsistent planning \cite{KleinbergO14}, is how to identify the minimum reward that will allow the agent to reach his goal? 

With the mean and variance, we can estimate the minimum award that can help to avoid abandonment. We need the Chebyshev's inequality:
$$\Pr(|C_\beta - \Em(C_\beta)| \geq a) \leq \frac{\Var(C_\beta)}{a^2} .$$
For  $a =2 \sqrt{\Var(C_\beta)}$, we have 
$$\Pr(|C_\beta - \Em(C_\beta)| \leq 2 \sqrt{\Var(C_\beta)}) \geq \frac{3}{4}.
$$
Then, as a reward, we  take $\Em(C_\beta) + 2 \sqrt{\Var(C_\beta)}$.
 With such reward the probability that the agent will reach his goal is  at least $3/4$.

One can also consider a model in which the costs that the agent already has spent  are deducted from the reward. (Imagine the situation when the agent has some resources and will not reach the goal when the resources are exhausted). In this case, the reward must be at least  the cost of the perceived path. In this situation, the reward provided by the Chebyshev's inequality is optimal for a probability $\frac{3}{4}$ of reaching the goal.

\section{Parameterized complexity of \EstIrrCostshort}\label{sectionParCom}

 In this section we investigate parameterized complexity of \EstIrrCost.
 A \emph{parameterized problem} is a language $Q\subseteq \Sigma^*\times\mathbb{N}$ where $\Sigma^*$ is the set of strings over a finite alphabet $\Sigma$. Respectively, an input  of $Q$ is a pair $(I,k)$ where $I\subseteq \Sigma^*$ and $k\in\mathbb{N}$; $k$ is the \emph{parameter} of  the problem. 
A parameterized problem $Q$ is \emph{fixed-parameter tractable} (\classFPT) if it can be decided whether $(I,k)\in Q$ in time $f(k)\cdot|I|^{\Oh(1)}$ for some function $f$ that depends of the parameter $k$ only. Respectively, the parameterized complexity class \classFPT is composed by  fixed-parameter tractable problems.
The $\operatorClassW$-hierarchy is a collection of computational complexity classes: we omit the technical
definitions here. The following relation is known amongst the classes in the $\operatorClassW$-hierarchy:
$\classFPT=\classW{0}\subseteq \classW{1}\subseteq \classW{2}\subseteq \ldots \subseteq \classW{P}$. It is widely believed that $\classFPT\neq \classW{1}$, and hence if a
problem is hard for the class $\classW{i}$ (for any $i\geq 1$) then it is considered to be fixed-parameter intractable. For our purposes,
to prove that a problem is   $\classW{1}$-hard it is sufficient to show that an \classFPT algorithm for this problem yields an  \classFPT algorithm for some  $\classW{1}$-complete problem. 
We refer to    \cite{CyganFKLMPPS15} for the detailed introduction to parameterized complexity.  
 
In graph algorithms, one of the most popular parameter is the treewidth of (an undirected) graph. 
Many \classNP-hard problems are  \classFPT parameterized by the treewidth of the input graph. 
We  refer  \cite{CyganFKLMPPS15}  for the definition of treewidth. For directed graph $G$, let  $\tw(G)$ be  the treewidth of its underlying undiricted graph.  In Theorem~\ref{theorem_ECIhard}, we have proved that  \EstIrrCostshort    is $\# P$-hard. The underlying undirected graph used in the reduction in Theorem~\ref{theorem_ECIhard}, has treewidth at most $2$. (See Fig.~\ref{fig_gadget}). Thus we immediately obtain the following corollary. 

\begin{corollary}\label{cor_ECIhardtw}
    The  \EstIrrCostshort  problem remains  $\# P$-hard even when the graph $G$ in  the time-inconsistent model has $\tw(G)\leq 2$.
\end{corollary}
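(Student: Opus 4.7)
The plan is very short because, as the statement itself suggests, the result is an immediate consequence of the graph constructed in the proof of Theorem~\ref{theorem_ECIhard}. All I need to verify is that the underlying undirected graph produced by that reduction has treewidth at most $2$, and then the parsimonious reduction from \CountPar automatically transfers the \classSharpP-hardness to this restricted class.

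First I would describe the underlying undirected structure. In the construction, each diamond gadget $D_i$ consists of two endpoints joined by two internally disjoint directed paths of length two; forgetting orientations turns a single gadget into a $4$-cycle on the four vertices (the two endpoints and the two interior vertices, one on each path). The full graph $G$ is obtained by concatenating $n$ such gadgets, where the terminal vertex of $D_i$ is identified with the initial vertex of $D_{i+1}$. Hence the underlying undirected graph is a chain of $4$-cycles glued at cut vertices, i.e., a series of blocks each isomorphic to $C_4$.

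Next I would exhibit an explicit tree decomposition of width $2$. For a single $4$-cycle with endpoints $a,b$ and interior vertices $u,u'$, take two bags $\{a,u,b\}$ and $\{a,u',b\}$ joined by a tree edge; each bag has size $3$, and the intersection $\{a,b\}$ covers the connective structure. For the whole graph, chain these pairs of bags gadget by gadget: the shared endpoint between $D_i$ and $D_{i+1}$ appears in the last bag of the $i$-th pair and the first bag of the $(i{+}1)$-th pair, so the running-intersection property holds. All bags have size $3$, giving width $2$. (Alternatively, one can invoke the standard fact that the treewidth of a graph equals the maximum treewidth of its biconnected components, each of which here is a $C_4$ of treewidth $2$.)

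Finally, I would combine this with Theorem~\ref{theorem_ECIhard}: the reduction is parsimonious from \CountPar to \EstIrrCostshort and the target value $T = \frac{nW}{\beta}$ is computable in polynomial time, so an algorithm for \EstIrrCostshort restricted to instances with $\tw(G) \leq 2$ would decide \CountPar. Since \CountPar is \classSharpP-hard \cite{DyerFKKPV93}, the corollary follows. There is essentially no obstacle here; the only thing one must be careful about is that the graph remains a chain of $C_4$'s after accounting for the source $s$ and target $t$, which are exactly the two outermost endpoints of the gadget chain.
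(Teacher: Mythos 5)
Your proposal is correct and follows the same route as the paper: the paper's own proof of this corollary simply observes that the graph produced by the reduction in Theorem~\ref{theorem_ECIhard} (the chain of diamond gadgets) has underlying undirected treewidth at most $2$, so the \classSharpP-hardness carries over to this restricted class. Your explicit width-$2$ tree decomposition of the chain of $4$-cycles just fills in the routine verification that the paper leaves implicit.
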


Besides the treewidth of the graph, another popular in the literature  graph  parameters are vertex cover and feedback vertex set. Let us remain, that for an undirected graph $G$, a \emph{vertex cover} of $G$ is a set of vertices $S\subseteq V(G)$, such that every edge of $G$ has at least one endpoint in $S$. In other words, the graph $G-S$ has no edges. For directed graph $G$, we use $\vc(G)$ to denote the minimum size of a vertex cover in the underlying undirected graph of $G$. 
A \emph{feedback vertex set} of an undirected graph $G$ is the set of vertices $S$ such that every cycle in $G$ contains at least one vertex from $S$. In other words,   graph $G-S$  
has no cycles and thus is a forest.   For directed graph $G$, we use $\fvs(G)$ to denote the minimum size of a feedback vertex cover in the underlying undirected graph of $G$. Let us note that 
\[
 \tw(G)\leq \fvs(G) \leq \vc(G)  
.\]

In what follows, we prove that   \EstIrrCostshort  is $\classW{1}$-hard  
parameterized by  $\vc(G)$. Since $\fvs(G) \leq \vc(G)$, it also yields that  \EstIrrCostshort  is \classW{1}-hard  
parameterized by  $\fvs(G)$.  On the other hand, we will give an algorithm solving   \EstIrrCostshort  
in time $n^{\Oh(\fvs(G))}$. Thus the problem is \classXP parameterized by $\fvs(G)$. Since 
$ \fvs(G) \leq \vc(G)$ it also implies that the problem is  \classXP parameterized by $\vc(G)$. 
 
We start from the lower bound.  We reduce   the following \classW{1}-hard problem to \EstIrrCostshort. 
\medskip

\defparproblem{\textsc{Modified $k$-Sum}}{Sets of integers $X_1, X_2, \dots, X_k$ and  integer $T$}{$k$}{Decide whether there is $x_1\in X_1$, $x_2\in X_2$, $\dots,$ $x_k\in X_k$ such that $x_1+\ldots+x_k=T$.}


The following lemma is a folklore. We could not find its proof in the literature and  provide a sketch.
\begin{lemma}
The \textsc{Modified k-Sum} problem is \classW{1}-hard with respect to the parameter $k$.
\end{lemma}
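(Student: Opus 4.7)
The plan is to reduce from \textsc{Multicolored Clique}: given a graph $G$ whose vertex set is partitioned into color classes $V_1,\ldots,V_k$, decide whether $G$ contains a clique taking exactly one vertex from each $V_i$. This problem is $\classW{1}$-hard parameterized by $k$~\cite{CyganFKLMPPS15}. From such an instance I will construct an equivalent instance of \textsc{Modified $k'$-Sum} with $k' = k + \binom{k}{2}$ sets. Since $k' \leq k^2 + k$ is bounded by a function of $k$, this is a valid parameterized reduction: an $\classFPT$ algorithm for \textsc{Modified $k'$-Sum} would yield an $\classFPT$ algorithm for \textsc{Multicolored Clique}.

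Assign every vertex of $G$ a distinct identifier in $\{1,\ldots,n\}$ where $n = |V(G)|$, and fix $N = n+1$ together with a base $B \geq 2N+1$. I view every integer produced by the reduction as a number written in base $B$ with $2\binom{k}{2}$ designated slots, namely $(i,j)_L$ and $(i,j)_R$ for each pair $1 \le i < j \le k$. For each color $i$, the set $X_i$ contains, for every $v \in V_i$ with identifier $\alpha$, an element whose slot at $(i,j)_L$ equals $\alpha$ for every $j > i$ and whose slot at $(j,i)_R$ equals $\alpha$ for every $j < i$ (all remaining slots are zero). For each pair $i<j$, the set $X_{ij}$ contains, for every edge $uv \in E(G)$ with $u \in V_i$, $v \in V_j$ and identifiers $\gamma,\delta$, an element whose slot at $(i,j)_L$ equals $N - \gamma$ and whose slot at $(i,j)_R$ equals $N - \delta$. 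The target $T$ has value $N$ in every slot.

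Correctness is verified slot by slot. Only $X_i$ and $X_{ij}$ contribute to slot $(i,j)_L$, producing $\alpha_i + (N-\gamma_{ij})$, where $\alpha_i$ is the identifier of the vertex chosen from $V_i$ and $\gamma_{ij}$ is the identifier of the $V_i$-endpoint of the edge chosen from $X_{ij}$; this equals $N$ iff $\alpha_i = \gamma_{ij}$. The symmetric analysis at $(i,j)_R$ forces the $V_j$-endpoint of the chosen edge to match the chosen vertex of $V_j$. Hence the selected sum hits $T$ in every slot (and thus globally) exactly when the $k$ chosen vertices together with the $\binom{k}{2}$ chosen edges form a multicolored clique of $G$, so the two instances are equivalent.

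The main obstacle is the base-$B$ bookkeeping: $B$ must be large enough so that the sum of two slot-contributions (each strictly below $N$) cannot "carry" into an adjacent slot, and the asymmetric role of $i$ and $j$ in $X_i$ versus $X_{ij}$ must line up consistently on both the $L$ and $R$ sides. All entries produced by the construction are non-negative since $N - \gamma \ge 1$, so no shift is required; should strict positivity or any other technical restriction be needed, adding a common large constant to every element of every $X_i$ and compensating $T$ accordingly preserves correctness without affecting the parameter.
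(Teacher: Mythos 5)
Your proof is correct, but it follows a genuinely different route from the paper. The paper reduces \emph{from} \textsc{k-Sum} (whose \classW{1}-hardness it cites from Downey--Fellows) via color coding: randomly color the $n$ input integers with $k$ colors so that, with probability $k!/k^k$, a solution becomes ``multicolored'', form the sets $X_i$ from the color classes, invoke the hypothetical \classFPT algorithm for \textsc{Modified k-Sum}, and derandomize by standard means. You instead give a direct, deterministic many-one parameterized reduction from \textsc{Multicolored Clique}, using $k'=k+\binom{k}{2}$ sets and a base-$B$ slot encoding in which vertex-selector sets write identifiers into slots and edge-selector sets write their complements $N-\gamma$, $N-\delta$, with target $N$ in every slot; since each slot sum is below $B\geq 2N+1$ there are no carries, so the total hits $T$ iff every pair's chosen edge is consistent with the chosen vertices, i.e.\ iff a multicolored clique exists. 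The bookkeeping checks out (only $X_i$, $X_j$, and $X_{ij}$ touch the slots of pair $(i,j)$, all digits are positive and bounded by $2n<B$, and $k'$ is a function of $k$, so the reduction is a legitimate \classFPT reduction). What each approach buys: the paper's argument is very short given color coding, but it is a randomized (then derandomized) reduction and leans on the prior hardness of \textsc{k-Sum}; yours is self-contained modulo the standard hardness of \textsc{Multicolored Clique}, avoids probabilistic arguments entirely, and is a clean many-one reduction, at the modest cost of the slot-encoding machinery and the blow-up of the number of sets from $k$ to $k+\binom{k}{2}$. One cosmetic point: if some $X_{ij}$ is empty (no edges between $V_i$ and $V_j$), you should output a trivial no-instance rather than an instance containing an empty set.
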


\begin{proof}[Sketch] 
Suppose that there is an \classFPT algorithm solving \textsc{Modified k-Sum}. We show that then \textsc{k-Sum} is also \classFPT and this will contradict the \classW{1}-hardness of \textsc{k-Sum}, see \cite{DowneyF99}. Recall, that in the instance of \textsc{k-Sum} we have a set $S$ of 
integers $a_1, a_2, \dots, a_n$, integer  $T$, and an integer parameter $k$. The task is to decide wether  the sum of $k$ integers from $S$ is equal to $T$. 
For instance  
$a_1, a_2, \dots, a_n, T$ of \textsc{k-Sum} we apply the color coding technique \cite{AlonYZ}. 

If the instance of \textsc{k-Sum} has a solution (a subset of size $k$ summing to $T$), then if we color the numbers in $k$ colors, with a probability $\frac{k!}{k^k}$, in our solution all numbers will receive different colors.  So we color the set of numbers and from the colored numbers we form the instance of the modified \textsc{k-Sum}, that is, each color corresponds to set $X_i$, and call the \classFPT algorithm on it. There are also well-known approaches how to  derandomize the color-coding. Thus we solved \textsc{k-Sum} in \classFPT time, which yields that $\classFPT= \classW{1}$. 
\end{proof}

\begin{theorem}\label{thm:w1-hard}
The \EstIrrCostshort problem is \classW{1}-hard  
parameterized by  $\vc(G)$ and by $\fvs(G)$.
\end{theorem}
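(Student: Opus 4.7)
The plan is to give a parameterized reduction from \textsc{Modified $k$-Sum} (\classW{1}-hard by the previous lemma) to \EstIrrCostshort that produces an instance whose underlying graph has both $\vc(G)$ and $\fvs(G)$ bounded by $k+1$. Since $\fvs(G)\le\vc(G)$, it suffices to prove \classW{1}-hardness with respect to $\vc(G)$: any \classFPT algorithm in $\fvs(G)$ would also be \classFPT in $\vc(G)$.

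Given an instance $(X_1,\ldots,X_k,T)$ of \textsc{Modified $k$-Sum}, I first shift each $X_i$ (adjusting $T$ correspondingly) so that every element of $X_i$ is a non-negative integer, and then I build a time-inconsistent planning model $M=(G,w,s,t,p,\beta)$ with $\beta=\tfrac{1}{2}$ whose graph $G$ is a chain of $k$ \emph{choice gadgets} sharing the vertices $v_1,\ldots,v_{k+1}$, with $s=v_1$ and $t=v_{k+1}$. The $i$-th gadget introduces, for every $x\in X_i$, an internal vertex $y_{i,x}$ together with arcs $v_i\to y_{i,x}$ of weight $a_{i,x}:=K_i-x$ and $y_{i,x}\to v_{i+1}$ of weight $b_{i,x}:=2x$, where $K_i\ge\max X_i$ is a constant chosen to keep all weights non-negative integers. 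The identity $a_{i,x}+\beta b_{i,x}=K_i$ is independent of $x$, so every two-edge path through the gadget is perceived equally by the agent standing at $v_i$; each first arc $v_i\to y_{i,x}$ is thus part of some perceived shortest $v_i$-$t$ path and receives probability $p(v_i,y_{i,x})=1/|X_i|$ under the uniform tie-breaking rule. Consequently every choice sequence $(x_1,\ldots,x_k)\in X_1\times\cdots\times X_k$ yields a feasible $s$-$t$ path of actual cost $\sum_i(K_i+x_i)$ that is traversed with probability $\prod_i 1/|X_i|$.

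Every arc of $G$ is incident to some $v_j$, so the set $\{v_1,\ldots,v_{k+1}\}$ is simultaneously a vertex cover and a feedback vertex set, giving $\vc(G),\fvs(G)\le k+1=\Oh(k)$. Setting $W^*:=\sum_i K_i+T$, I will observe that
\[
\Pr(C_\beta\le W^*)\;-\;\Pr(C_\beta\le W^*-1)\;=\;\frac{\bigl|\{(x_1,\ldots,x_k)\in X_1\times\cdots\times X_k:\sum_i x_i=T\}\bigr|}{\prod_i|X_i|},
\]
which is positive exactly when the \textsc{Modified $k$-Sum} instance is a YES-instance. Translating from $C_\beta$ to $\costirrationality=C_\beta/d(s,t)$ is harmless since $d(s,t)=\sum_i K_i+\sum_i\min X_i$ is computable in polynomial time, so any hypothetical \classFPT algorithm for \EstIrrCostshort parameterized by $\vc(G)$ would, via two oracle calls and a subtraction, decide \textsc{Modified $k$-Sum} in time $f(k)\cdot n^{\Oh(1)}$, contradicting its \classW{1}-hardness.

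The main technical point I still need to verify carefully is the feasibility claim underpinning the construction: that $a_{i,x}+\beta b_{i,x}=K_i$ really forces \emph{every} arc $v_i\to y_{i,x}$ to lie on some perceived shortest $v_i$-$t$ path. Writing out the perceived cost from $v_i$ along an arbitrary continuation gives $K_i+\beta\sum_{j>i}(K_j+x_j)$, which is minimised by choosing $x_j=\min X_j$ for $j>i$ \emph{regardless of the value of $x_i$}; hence every $x_i\in X_i$ extends to some $v_i$-$t$ path of perceived cost $\zeta_M(v_i)$, so every arc $v_i\to y_{i,x}$ is indeed feasible and the uniform assignment $p(v_i,y_{i,x})=1/|X_i|$ is legitimate.
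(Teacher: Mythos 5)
Your proposal is correct and follows essentially the same route as the paper's proof: a parameterized reduction from \textsc{Modified $k$-Sum} via a chain of $k$ choice gadgets whose branches all have equal perceived cost but whose real cost encodes the selected elements, with the answer extracted from $\Pr(C_\beta\le W^*)-\Pr(C_\beta\le W^*-1)$ and $\vc(G),\fvs(G)=\Oh(k)$. The only difference is cosmetic: you place the compensating weight on the first arc ($K_i-x$, then $2x$ with $\beta=\tfrac12$) so the real cost increases with $\sum_i x_i$, whereas the paper puts it on the second arc ($x$, then $\frac{W-x}{\beta}$), and you additionally spell out the shifting to non-negative weights and the explicit vertex cover, which the paper leaves implicit.
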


\begin{proof}
We construct a parameterized reduction of the \textsc{Modified k-Sum} problem to the \EstIrrCostshort  problem.

Let's give an instance of the \textsc{Modified k-Sum} problem: $X_1, X_2, \dots, X_k$ and $T$.
Similarly to the proof of \classSharpP hardness, we construct for each set $X_i$ a gadget in which all paths will be perceived for the agent and will be evaluated in $W$ (for an edge $c_i$, an additional edge will have the weight $\frac{W-c_i} {\beta}$), where $W$ is an integer greater than all $x \in \cup_i X_i$. The graph $G$ consists of $k$ gadgets concatenated together. As $\beta$, we take any constant from 0 to 1, for example, $\beta = \frac{1}{2}$.

\begin{figure}[ht]
    \center{\includegraphics[scale=0.18]{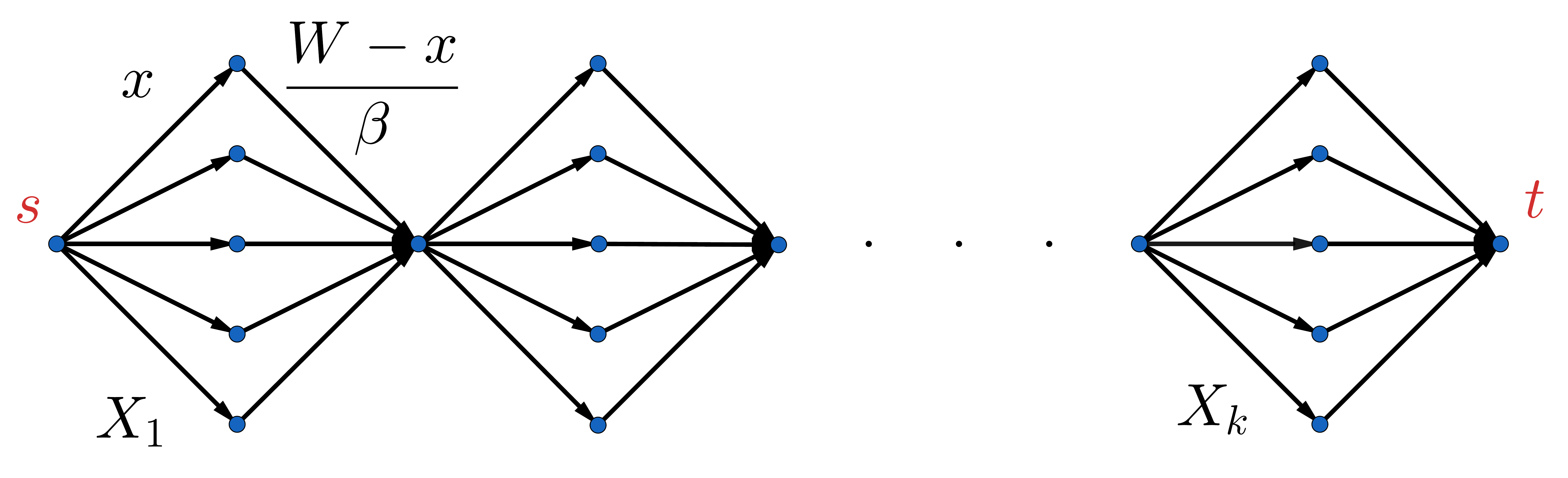}}
    \caption{Gadget used in \classW{1}-hardness proof.}
\end{figure}
Let's set the target weight of the path to the agent as follows: $Y = T \cdot (1 - \dfrac{1}{\beta}) + \dfrac{k \cdot W}{\beta}$. The task parameter is the vertex cover or feedback vertex set will be equal to $\Oh(k)$.

We now show that the answer to the \textsc{Modified k-Sum} problem is positive if and only if the answer to the \EstIrrCostshort problem is positive. Let there be a set $x_i \in X_i$, which in total gives $T$, then the agent, choosing a path in the $i$ gadget, the first edge of which has a weight of $x_i$, will get a path whose total weight is
$$\sum_i x_i + \dfrac{W - x_i}{\beta} = T + \dfrac{k \cdot W}{\beta} - \dfrac{T}{\beta}.$$

Conversely, let the agent find the path of the desired weight $Y$, denote the weight of the first edge in the $i$ gadget in this path for $x_i$. Let $\sum x_i = T'$. Then $$T \cdot (1 - \dfrac{1}{\beta}) + \dfrac{k \cdot W}{\beta} = Y  = \sum_i x_i + \dfrac{W - x_i}{\beta} = $$
$$ = T' + \dfrac{k \cdot W}{\beta} - \dfrac{T'}{\beta} = T' \cdot (1 - \dfrac{1}{\beta}) + \dfrac{k \cdot W}{\beta}.$$
We get that $T' = T$.

 We already established that existence of $s$-$t$ paths of cost $Y$ in $G$ is \classW{1}-hard. Now
   we show that computing $\Pr(C_{\beta} \leq Y)$ is 
   \classW{1}-hard. 
   Note that
   $$
   \Pr(C_{\beta} = Y) = \Pr(C_{\beta} \leq Y) - \Pr(C_{\beta} \leq Y-1).
   $$
Thus, if $\Pr(C_{\beta} = Y) > 0$, then there is a feasible path of cost $Y$.
Therefore,  the existence of a  \classFPT algorithm computing $\Pr(C_{\beta} \leq Y)$ would allow us to check for existence the paths of cost $Y$ in \classFPT time.

\end{proof}

%

Theorem~\ref{thm:w1-hard} rules out the existence of an algorithm solving \EstIrrCostshort  in time $f(\vc(G))n^{\Oh(1)}$ for any function $f$ of $\vc(G)$ only. (Unless \classFPT=\classW{1}.) 
In what follows, we prove that when  $\fvs(G)$ is a constant, then the problem is solvable in polynomial time, that is, is in \classXP parameterized by 
$\fvs(G)$  (and hence by $\vc(G)$).

We start from the following combinatorial lemma. 
\begin{lemma}
\label{lem: fvs}
Let $\fvs(G)=k$.  Then the number of different $s$-$t$ paths in $G$ is at most $k^k n^{\Oh(k)}$.
\end{lemma}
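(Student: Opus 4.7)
The plan is to exploit the tree structure of $G - S$ for a feedback vertex set $S$ of size $k$ and reduce counting $s$-$t$ paths to counting their interactions with $S$. Throughout, let $F = G - S$; by the definition of a feedback vertex set the underlying undirected graph of $F$ is a forest, so between any two vertices of $F$ there is at most one undirected path, and hence at most one directed path in $F$.

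First I would decompose an $s$-$t$ path by its $S$-trace. Since every vertex appears at most once on a path, any $s$-$t$ path $P$ determines an ordered sequence $(v_{i_1},\ldots,v_{i_r})$ of at most $r \leq k$ vertices of $S$ that it visits. Together with $s$ and $t$, this sequence splits $P$ into $r+1 \leq k+1$ directed \emph{segments}, each being a subpath whose internal vertices lie entirely in $V(F)$. So it suffices to multiply the number of ``skeletons'' $(s, v_{i_1},\ldots,v_{i_r}, t)$ by the number of ways to realize each segment.

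Next I would establish the key combinatorial claim: for any two vertices $a, b$, the number of directed $a$-$b$ paths whose internal vertices lie in $V(F)$ is at most $n^2 + 1$. Indeed, either the segment is the single edge $a \to b$, or it has the form $a \to u_1 \to \cdots \to u_\ell \to b$ with $u_1, u_\ell \in V(F)$. By the forest property, the subpath $u_1 \to \cdots \to u_\ell$ inside $F$ is completely determined by $u_1$ and $u_\ell$ (there is at most one candidate undirected path, which either is or is not consistent with the orientations in $G$). Thus the segment is specified by the pair $(u_1, u_\ell) \in V(F)^2$, giving at most $n^2 + 1$ options.

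Finally I would combine the bounds. The number of ordered sequences of distinct elements of $S$ of any length $r \leq k$ is $\sum_{r=0}^k k(k-1)\cdots(k-r+1) \leq e \cdot k! \leq e \cdot k^k$; for each skeleton of length $r$, the $r+1$ segments contribute at most $(n^2+1)^{r+1} \leq n^{\Oh(k)}$ choices. The product is $k^k \cdot n^{\Oh(k)}$, as required. The most delicate point is the segment count: although the endpoints $a,b$ may lie in $S$ and each have many neighbors in $V(F)$, the unique-path property of a forest kicks in once we condition on the first and last \emph{internal} vertex of the segment, and this is what keeps the segment count polynomial in $n$.
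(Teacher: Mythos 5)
Your proof is correct and follows essentially the same route as the paper's: remove a feedback vertex set $S$ of size $k$, classify each $s$-$t$ path by the ordered sequence of $S$-vertices it visits (at most roughly $k^k$ choices), and observe that each segment between consecutive $S$-vertices is determined by its first and last vertex in the forest $G-S$ (at most $n^2$ choices per segment), yielding the bound $k^k\cdot n^{\Oh(k)}$. Your write-up is in fact slightly more careful than the paper's (explicitly handling single-edge segments and the $r+1$ segment count), but the decomposition and key forest-uniqueness argument are identical.
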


\begin{proof}
Let's assume that for a graph $G$ $\fvs$ is equal to k, we remove the set of vertices $S$, $|S| = k$ that implements $\fvs$, then the undirected skeleton of our graph becomes a forest. Then there is at most one path from $s$ to $t$.

Note that when adding a set of $S$ remote vertices, oriented paths from $s$ to $t$ may appear in the graph. Let's estimate their number from above: in one path each vertex can be visited no more than once, since our directed graph was acyclic.
Therefore, each path can be characterized by a sequence of pieces of the path that go along the vertices of $S$, in the order in which they meet in the path. Their number is not more than $k^k$. And also for each such piece, we need to fix the entry vertex into it from $V(G) \setminus S$, and the exit vertex in $V (G) \setminus S$. We get that for one such piece, the number of pairs (input, output) is not more than $n^2$, and the pieces of the path are not more than $k$, so we get an estimate from above $n^{2k}$. Moreover, this choice uniquely sets the path for us, since $V (G) \setminus S$ is a forest, and there is no more than one path between any two vertices.

Hence the total number of paths from $ s $ to $ t $ doesn't exceed $ k^k \cdot n^{2k}.$
\end{proof}

By making use of Lemma~\ref{lem: fvs}, we prove the following theorem. 
\begin{theorem}\label{thm:ECIFVS}
The \EstIrrCostshort problem admits an algorithm of running time 
$n^{\Oh(\fvs(G))} \cdot \fvs(G)^{\fvs(G)}$.
\end{theorem}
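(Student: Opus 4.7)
The plan is to enumerate all directed $s$-$t$ paths in $G$ and, for each one, decide whether it is feasible for the present-biased agent, compute the probability that the agent follows it, and compute its actual cost. Summing those probabilities over feasible paths whose cost is at most $\lfloor W \cdot d(s,t)\rfloor$ yields $\Pr(\costirrationality \leq W)$. By Lemma~\ref{lem: fvs}, the number of $s$-$t$ paths in $G$ is at most $k^k \cdot n^{\Oh(k)}$ with $k = \fvs(G)$, so an enumeration that spends polynomial time per path automatically fits within the target running time; this is the main structural pillar of the argument.

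First I obtain a feedback vertex set $S$ of size at most $k$ in the underlying undirected graph of $G$. Not knowing $k$ in advance, I iterate $k = 0, 1, 2, \dots$ and, at each level, brute-force search over all vertex subsets of size $k$, checking acyclicity in polynomial time; this costs $n^{\Oh(\fvs(G))}$ in total. Next I compute $d(u,t)$ for every $u \in V(G)$ in $\Oh(n^2)$ time by dynamic programming along a reverse topological ordering of the DAG $G$, and from these values I compute $\zeta_M(v) = \min_{vu \in E(G)}\bigl[w(vu) + \beta \cdot d(u,t)\bigr]$ for every $v$. I call an outgoing edge $vu$ \emph{eligible} if $w(vu) + \beta \cdot d(u,t) = \zeta_M(v)$; these are exactly the edges along which the agent in state $v$ may move with positive probability, i.e., those with $p(v,u) > 0$.

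To enumerate paths I mimic the counting argument of Lemma~\ref{lem: fvs}: I guess the ordered subsequence $(s_{i_1}, \ldots, s_{i_j})$ of $S$ visited along the path (at most $k^k$ choices), and for each successive pair I guess the entry and exit vertices in $V(G)\setminus S$ of the corresponding maximal subpath ($n^{\Oh(k)}$ choices). Because the underlying undirected structure of $G \setminus S$ is a forest, between any two prescribed endpoints in $V(G) \setminus S$ there is at most one candidate directed path in $G$, which is recoverable in polynomial time. For each resulting $s$-$t$ path $P = v_0 v_1 \cdots v_p$, I check whether every edge $v_i v_{i+1}$ is eligible from $v_i$; if so, $P$ is feasible, its contribution to $\Pr(C_\beta = \sum_i w(v_i v_{i+1}))$ is $\prod_{i=0}^{p-1} p(v_i, v_{i+1})$, and I add this contribution to the running sum whenever the cost $\sum_i w(v_i v_{i+1})$ is at most $\lfloor W \cdot d(s,t)\rfloor$.

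The main obstacle is converting the counting bound of Lemma~\ref{lem: fvs} into an explicit algorithmic enumeration without duplicating paths or incurring extra overhead; this is resolved by the ``sequence of $S$-vertices plus entry/exit ports into each forest segment'' encoding, which is computable directly in the stated time. Everything else---perceived costs, eligibility checks, path probabilities, and accumulation---is routine polynomial-time bookkeeping per path, and the total running time is therefore $n^{\Oh(\fvs(G))} \cdot \fvs(G)^{\fvs(G)}$.
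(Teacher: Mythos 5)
Your proposal is correct, and it reaches the stated running time, but it takes a different algorithmic route than the paper. The paper keeps the pseudo-polynomial dynamic programming of Algorithm~\ref{alg:ECI} and observes that, by Lemma~\ref{lem: fvs}, the number of distinct accumulated costs that can ever appear at a vertex is bounded by the number of $s$-$t$ paths, i.e.\ $\fvs(G)^{\fvs(G)} n^{\Oh(\fvs(G))}$, so the per-vertex cost arrays are sparse and the same DP runs within the claimed time. You instead bypass the DP entirely: you make the counting argument of Lemma~\ref{lem: fvs} constructive (ordered subsequence of feedback vertices plus entry/exit vertices of the forest segments, each segment having at most one realization), enumerate all $s$-$t$ paths explicitly, and aggregate $\prod_i p(v_i,v_{i+1})$ over paths of cost at most $W\cdot d(s,t)$; this is valid because in the Markov chain the paths are mutually exclusive events with exactly these probabilities, and your criterion $w(vu)+\beta\, d(u,t)=\zeta_M(v)$ correctly characterizes the edges on which $p$ may be positive (if $p$ happens to be $0$ on such an edge, the path's contribution vanishes anyway, so the sum is unaffected). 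Your route costs a little extra care that the paper's does not need --- you must argue the encoding enumerates each path exactly once and should allow empty forest segments when two guessed feedback vertices are joined by a direct arc --- but in exchange it is more explicit and self-contained than the paper's rather terse ``path residues'' argument, and it does not depend on the correctness of Algorithm~\ref{alg:ECI} beyond the model semantics. Both arguments ultimately rest on the same pillar, the path-count bound of Lemma~\ref{lem: fvs}.
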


\begin{proof}
Let $k=\fvs(G)$. We note that in the  pseudo-polynomial algorithm~\ref{alg:ECI},  there are no more path residues in each vertex dynamics than the number of different paths. Consequently, by  Lemma~\ref{lem: fvs}, we obtain that in line 7 of the algorithm~\ref{alg:ECI}, at each step, an array of size at most $n^{f(k)}$ will be maintained. So each set in dynamics is less than $n^{2k} \cdot k^k$, we get the time $\Oh(n^{2k+3} \cdot k^k)$.
\end{proof}

Because of Theorem~\ref{thm:w1-hard}, the running time provided by Theorem~\ref{thm:ECIFVS} is basically the best we can hope for. However, the \EstIrrCostshort  problem is \classFPT being parameterized by the feedback edge set of the underlying undirected graph. Let us remind, that a \emph{feedback edge set} of an undirected graph is  a set of edges  whose removal makes the graph acyclic. For directed graph $G$, we use   $\fes(G)$ to denote the minimum size of a feedback edge set of its underlying undirected graph.

First we bound the number of paths in a graph by a function of $\fes(G)$. 
\begin{lemma}
\label{lem: fes}
Let $\fes(G)=k$.  
Then the number of different $s$-$t$ paths in $G$ is at most 
  $2^{k}$.
\end{lemma}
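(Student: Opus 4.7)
The plan is to use the fact that once we delete a feedback edge set $F$ of size $k$, the remaining directed graph $G - F$ has an underlying undirected forest, so it has at most one directed path between any two vertices. I will then show that every $s$-$t$ path $P$ in $G$ is uniquely determined by the subset $F_P = P \cap F$ of feedback edges it uses. Since there are at most $2^k$ such subsets, the bound follows.

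Concretely, first I would fix a minimum feedback edge set $F \subseteq E(G)$ of the underlying undirected graph, $|F| = k$. Define the map $\phi$ sending an $s$-$t$ path $P$ in $G$ to $\phi(P) = P \cap F$. The goal is to prove $\phi$ is injective.

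The main obstacle is handling the \emph{order} in which the feedback edges appear along $P$. Even if two $s$-$t$ paths $P, P'$ use the same set $\{e_1, \dots, e_m\} \subseteq F$, a priori they could traverse them in different orders and then be patched together with different (but still unique-in-the-forest) sub-paths in $G - F$. I would rule this out using acyclicity of $G$: suppose $e_a$ precedes $e_b$ in $P$ while $e_b$ precedes $e_a$ in $P'$. Writing $e_a = (u_a, v_a)$ and $e_b = (u_b, v_b)$, the first ordering yields a directed $v_a$-$u_b$ path inside $G - F$ and the second a directed $v_b$-$u_a$ path inside $G - F$. Concatenating them with $e_a$ and $e_b$ gives the directed cycle $u_a \to v_a \to u_b \to v_b \to u_a$ in $G$, contradicting the DAG property. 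Hence the order of the feedback edges along any $s$-$t$ path is uniquely dictated by $F_P$.

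Once the order is fixed, the pieces of $P$ between consecutive feedback edges (and the initial $s$-piece and final $t$-piece) all lie in $G - F$, and between any two vertices $G - F$ admits at most one directed path (since its underlying graph is a forest). Consequently $P$ is completely reconstructed from $F_P$, proving injectivity of $\phi$ and giving $|{\{s\text{-}t\text{ paths in }G\}}| \le |2^F| = 2^k$, which is the claimed bound.
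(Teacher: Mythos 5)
Your proposal is correct and follows essentially the same route as the paper: map each $s$-$t$ path to the subset of feedback edges it uses, observe that acyclicity forces a unique order on those edges, and use the forest structure of $G-F$ to make the connecting segments unique, giving the $2^k$ bound; you merely spell out the cycle argument for the order that the paper only asserts. One tiny inaccuracy: the sub-path of $P$ from $v_a$ to $u_b$ need not lie \emph{inside} $G-F$ (it may contain other feedback edges), but this is harmless since the resulting directed cycle lives in $G$ and that already contradicts acyclicity.
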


\begin{proof}
We remove the set of edges $F$, $|F| = k$ that implements $\fes$.  Then the underlying undirected  graph of $G$ becomes a forest. In the forest  there is at most one path from $s$ to $t$.

Note that when adding a set of $F$ deleted edges, oriented paths from $s$ to $t$ may appear in the graph. Let's estimate their number from above: in one path each edge can be traversed no more than once---our directed graph was acyclic.
Since each directed path from $s$ to $t$ is uniquely defined by the set of edges from $F$ that participate in it, we have  that the number of paths does not exceed $2^k$.
Because the directed graph $G$ is acyclic, the order in which the edges selected from $F$ meet along the path is uniquely determined. And since the undirected skeleton of the graph $G \setminus F$ is a forest, there is at most  one path between each pair of consecutive edges selected from $F$.
\end{proof}

By Lemma~\ref{lem: fes}, we obtain  that \EstIrrCostshort is \classFPT parameterized by $\fes(G)$. 
\begin{theorem}\label{thm:fes}
The \EstIrrCostshort problem is solvable in time $2^{\fes(G)} \cdot poly (n)$.
\end{theorem}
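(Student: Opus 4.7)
The plan is to exploit Lemma~\ref{lem: fes}: since the number of distinct directed $s$-$t$ paths in $G$ is at most $2^{k}$ with $k = \fes(G)$, and each such path can be processed in polynomial time, explicit path enumeration yields the desired $2^{\fes(G)} \cdot poly(n)$ bound. As preprocessing I would fix a topological order of $G$, compute shortest-path distances $d(v,t)$ for every $v$ (and in particular $d(s,t)$) by a backward pass on the DAG, find a minimum feedback edge set $F$ of the underlying undirected graph (equivalently, a spanning forest, computable in polynomial time), and for every pair $(u,v)$ precompute the unique undirected $u$-$v$ path in the forest $G \setminus F$ together with a check of whether it is consistent as a directed $u$-$v$ path in $G$. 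This costs $O(n^2)$ on the forest plus standard DAG routines.

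Next I enumerate the $2^k$ subsets $S \subseteq F$. By the uniqueness argument underlying Lemma~\ref{lem: fes}, every directed $s$-$t$ path in $G$ is determined by the subset $S$ of feedback edges it traverses: the edges of $S$ must appear in the fixed topological order as $f_1,\ldots,f_{|S|}$, and each of the $|S|+1$ "glue" segments (from $s$ to the tail of $f_1$, from the head of $f_i$ to the tail of $f_{i+1}$, and from the head of $f_{|S|}$ to $t$) must be a directed path in $G \setminus F$, of which there is at most one. Using the precomputed table, the unique candidate path corresponding to each subset is either reconstructed or the subset is discarded.

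For each reconstructed candidate path $P = v_0 v_1 \ldots v_p$, I compute $\mathrm{cost}(P) = \sum_i w(v_i v_{i+1})$ and the realization probability $\Pr(P) = \prod_i p(v_i, v_{i+1})$. Because $p(u,v)$ is positive only on edges that a present-biased agent could choose, infeasible paths automatically contribute zero. The output is then
$$\Pr\bigl(\costirrationality \le W\bigr) \;=\; \sum_{P:\ \mathrm{cost}(P) \,\le\, W \cdot d(s,t)} \Pr(P),$$
and since each path is handled in $O(n)$ time the total running time is $2^{\fes(G)} \cdot poly(n)$.

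The main obstacle is justifying the enumeration step in full---namely, confirming that distinct subsets of $F$ produce distinct candidate paths and that the reconstruction from a subset to a directed path is both sound and complete. This follows from the acyclicity of $G$ (which forces a unique topological ordering of the edges of $S$ along any path through them) combined with the forest structure of $G \setminus F$ (which guarantees at most one undirected, hence at most one directed, $u$-$v$ path for any $u,v$). Once this bookkeeping is in place the running-time analysis is immediate.
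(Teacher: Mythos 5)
Your proposal is correct, but it proves the theorem by a different route than the paper. The paper keeps the pseudo-polynomial dynamic program (Algorithm~\ref{alg:ECI}) and only uses Lemma~\ref{lem: fes} as a cardinality bound: since at most $2^{\fes(G)}$ distinct $s$-$t$ paths exist, each vertex's table contains at most $2^{\fes(G)}$ nonzero cost entries, so a sparse representation of the DP runs in $2^{\fes(G)}\cdot poly(n)$ time regardless of how large the weights are. You instead make the structural content of Lemma~\ref{lem: fes} algorithmic: every directed $s$-$t$ path is determined by the subset of feedback edges it uses (ordered by acyclicity, glued by the unique forest segments of $G\setminus F$), so you enumerate the $2^{\fes(G)}$ subsets, reconstruct the at most one candidate path per subset, and sum $\prod_i p(v_i,v_{i+1})$ over the reconstructed paths of cost at most $W\cdot d(s,t)$; infeasible paths vanish because their transition probabilities are zero, and distinct subsets cannot yield the same path since the glue segments avoid $F$. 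Both arguments are sound and give the claimed bound. Your enumeration is more self-contained and explicit (it computes the entire distribution of $C_\beta$ directly and sidesteps the question of how the paper's cost-indexed array is represented when $W\cdot d(s,t)$ is exponential), whereas the paper's version is shorter and reuses machinery already analyzed, needing only the counting statement of Lemma~\ref{lem: fes} rather than its constructive proof.
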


\begin{proof}Set $k=\fes(G)$. 
By Lemma~\ref{lem: fes}, the number of different paths in the graph is at most $2^k$. Consequently, in line 7 of the algorithm~\ref{alg:ECI}, at each step, an array of size at most $2^k$ will be obtained.Then our pseudo-polynomial algorithm~\ref{alg:ECI} will work in time $2^k \cdot poly (n)$.
\end{proof}

\section{Conclusion}\label{sec:conclusion}
We introduced the new model of the cost of irrationality. For future research we present two open algorithmic questions related to our model. 

The first question concerns the motivation and establishing rewards \cite{Albers2018}. Assume that by achieving the goal $t$ the agent  hopes to receive a reward. If at some moment the perceived costs becomes larger than the reward, the agent abandons the mission. For a given probability $p$, how difficult is to compute (exactly or approximately) the minimum reward that would   allow the agent not to  abandon his mission with probability at least $p$? 

The second question is related to the question of  finding a motivating subgraph \cite{KleinbergO14,FominS20}.  We gave a polynomial time algorithm computing the expected   cost of irrationality $\Em(\costirrationality )$.   Consider the following algorithmic task:  delete  at most $k$ edges (or vertices) such that in the resulting graph the expected cost of irrationality is less than $\Em(\costirrationality )$.  
 Of course,  there is  a brute-force algorithm solving the problem in time $n^{\Oh(k)}$ by calling our polynomial-time algorithm for each of the  ${n}\choose{k}$ possibilities of deleting $k$ edges (or vertices). But  whether the problem is $\classFPT$ parameterized by $k$, is an interesting open question.

\bibliography{references}
\bibliographystyle{plainurl}

\appendix

\end{document}